\numberwithin{equation}{section}
\newtheorem{definition}{Definition}[section]
\newtheorem{proposition}{Proposition}[section]
\begin{document}
	\begin{frontmatter}
		\title{Higher form Yang-Mills as higher BFYM theories}
		\author{Danhua Song \corref{cor1}}
		\ead{danhua_song@163.com}
		\author{Kai Lou}
		\author{Ke Wu}
		\ead{wuke@cnu.edu.cn}
		\author{Jie Yang}
		\ead{yangjie@cnu.edu.cn}
			\cortext[cor1]{Corresponding author.}
		\address{School of Mathematical Sciences, Capital Normal University, Beijing 100048, China}
		\date{}
		
		\begin{abstract}
	The YM theory has been generalized to 2YM and 3YM theories. Similarly, we generalize the BFYM theory to ``2BFYM" and ``3BFYM" theories. 	Then, we show that these higher BFYM theories can give the formulations of the corresponding higher form YM theories. Finally, we study the gauge symmetries of these higher BFYM theories.

		\end{abstract}
		
		\begin{keyword}
	Lie (2-)crossed Modules, 2-form Yang-Mills theory, 3-form Yang-Mills theory, 2BFYM theory, 3BFYM theory.
		\end{keyword}	
	\end{frontmatter}
	%\clearpage
%	\tableofcontents
%\maketitle

\section{Introduction}

As an ordinary gauge theory, the Yang–Mills (YM) theory has many interesting aspects of physics, due to the role in the standard model in high energy physics \cite{WS}. 
In recent decades, higher-dimensional extended objects are considered as the basic constituents of matter and mediators of fundamental interactions. Therefore, many theories used to study point-like objects are generalized to higher versions. In particular, the higher gauge theory \cite{Baez.2010, Baez2005HigherGT, Bar, JBUS, Faria_Martins_2011, doi:10.1063/1.4870640, ACJ, FH}, as a generalization of the ordinary gauge theory, is considered to be the geometrically promising technique to describe the dynamics of the higher-dimensional extended objects. 
The basic fields studied in this theory are various higher gauge fields, which are locally given by higher differential forms with values in higher algebras, and are globally modeled by higher bundles with higher connections.  Moreover, the higher gauge theory has attracted considerable attentions in many branches of physics, such as the six-dimensional superconformal field theory \cite{Saemann:2013pca}, quantum gravity \cite{TRMV11, AMMV}, string theory \cite{USJ, U},  M-theory \cite{ HSG, SP, Fiorenza:2012mr} and so on.  
Within the higher gauge theory,  the electromagnetic theory has been also generalized to a $p$-form electromagnetic theory \cite{ 10.1007/BF01889624, HP, Kalb:1974yc}. Similarly, the YM theory has been generalized to the  2-form Yang-Mills (2YM) theory \cite{2002hep.th....6130B}  and 3-form Yang-Mills (3YM) theory \cite{sdh} for 2- and 3-dimensional extended objects, respectively.

The YM theory can be formulated as a deformation of the BF theory, called BF-YM (BFYM) theory \cite{ASCPCR, ASC, MMMAZ}. As a topological field theory, the BF theory plays a major role in the physics theory, such as the general relativity and quantum gravity  \cite{ASPCRJ, Celada:2016jdt}. 
Furthermore, the topological BF theory has been generalized to the topological 2BF \cite{ Martins:2010ry, FGHPEM, AMMA, AMMAMV, AMMAM, AMMA1} and 3BF theories \cite{Radenkovic:2019qme, TRMV,  MM1} within the framework of the higher gauge theory.
These higher BF theories can describe the correct dynamics of YM,  Dirac, Weyl, Klein-Gordon and Majorana fields coupled to Einstein-Cartan gravity. See \cite{Radenkovic:2020weu} for more details. 	
The developments of the higher YM and BF theories
motivate us to consider the higher counterparts of the BFYM theory by following the idea on the higher gauge theory.

Within the framework of the higher gauge theory, the BF and YM theories  were generalized to their higher counterparts by using the idea of a categorical ladder, which promotes the underlying algebraic structure from a group to a higher group. 
To be more precise, we sketch their mathematical structures by the steps of the categorical ladder in Table \ref{table 1}. As we shall see in section  \ref{sub2}, these higher groups have more abundant of properties.
\begin{center}
	\begin{table}[htp]
		\caption{The mathematical structures of (higher) gauge theories} \label{table 1}
		\setlength{\tabcolsep}{3mm}{
			\begin{tabular}{cccc}
				\specialrule{1pt}{5pt}{1pt} \specialrule{0.5pt}{1pt}{3pt}
				gauge theory & categorical structure& algebraic structure& linear structure   \\[1mm] 
				\specialrule{0.5pt}{2pt}{6pt} 
				BF, YM & Lie group & Lie group & Lie algebra  \\[1.5mm]
				\specialrule{0.5pt}{1pt}{7pt}  
				2BF, 2YM & Lie 2-group & Lie crossed module &
				\begin{tabular}[c]{@{}l@{}}\ \ \ \ differential  \\\ crossed module \end{tabular} \\[1.5mm]
				\specialrule{0.5pt}{1pt}{6pt} 
				3BF, 3YM & Lie 3-group& Lie 2-crossed module &  \begin{tabular}[c]{@{}l@{}} \ \ \ \ \ differential \\\
					2-crossed module \end{tabular}   \\[2.5mm]  
				\specialrule{0.5pt}{2pt}{1pt} \specialrule{1pt}{1pt}{0pt} 
		\end{tabular} } 
		
	\end{table}
\end{center}

In view of the BFYM theory connecting the YM and BF theories, three questions naturally arise:
\begin{itemize}
	\item whether higher BFYM theories exist by using the idea of the categorical ladder?
	\item how to construct the higher BFYM actions, which are classically equivalent to the higher YM theories?
	\item whether these higher BFYM theories are  gauge invariant under higher gauge transformations?
\end{itemize}

For the first question, we establish two new theories---called respectively 2BF-2-form YM (2BFYM) and 3BF-3-form YM (3BFYM) theories---which are constructed under the categorical generalizations.
For the second question, we prove that the 2BFYM and 3BFYM theories are respectively equivalent to the 2YM and 3YM theories on the classic level.
For the third question, we study the higher gauge symmetries of the 2BFYM and 3BFYM theories.
We list  our results (2BFYM and 3BFYM actions) and the known results in Table \ref{table 2}. We note that these actions of higher BFYM theories are nice generalizations of the known counterparts. 
%It would be desirable to prove  the quantum equivalence between higher YM and BFYM theories, but we have not been able to do this.
\begin{center}
	\begin{table}[htp]
		\caption{The actions of (higher) BF, BFYM and YM theories}\label{table 2}
		\centering
		\setlength{\tabcolsep}{7mm}{
			\begin{tabular}{cc}
				\specialrule{1pt}{5pt}{1pt} \specialrule{0.5pt}{1pt}{3pt}
				theory   & action    \\[1mm] 
				\specialrule{0.5pt}{2pt}{6pt} 
				BF & $ S_{BF} = \displaystyle{\int_{M} \langle \overline{B}, F \rangle}$  \\[1.5mm]
				\specialrule{0.5pt}{1pt}{7pt}  
				BFYM & $S_{BFYM} = \displaystyle{\int_{M} (\langle \overline{B}, F \rangle + e^2 \langle \overline{B}, \ast \overline{B} \rangle)}$  \\[1.5mm]
				\specialrule{0.5pt}{1pt}{6pt} 
				YM& $S_{YM} = - \dfrac{\sigma}{4e^2} \displaystyle{\int_{M} \langle F, \ast F \rangle}$  \\[2.5mm]  
				\specialrule{0.5pt}{1pt}{6pt}
				2BF & $S_{2BF} = \displaystyle{\int_{M} (\langle \overline{B}, \mathcal{F} \rangle + \langle \overline{C}, \mathcal{G} \rangle )} $   \\[1.5mm]
				\specialrule{0.5pt}{1pt}{7pt}  
				2BFYM & $S_{2BFYM} = \displaystyle{\int_{M} (\langle \overline{B}, \mathcal{F} \rangle +  e^2 \langle \overline{B}, \ast \overline{B} \rangle + \langle \overline{C}, \mathcal{G} \rangle +  e^2 \langle \overline{C}, \ast \overline{C} \rangle )}$  \\[1.5mm]
				\specialrule{0.5pt}{1pt}{6pt}
				2YM& $S_{2YM} = - \dfrac{\sigma}{4e^2} \displaystyle{\int_{M} (\langle \mathcal{F} , \ast \mathcal{F} \rangle +\langle \mathcal{G} , \ast \mathcal{G} \rangle )}$  \\[2.5mm]  
				\specialrule{0.5pt}{1pt}{6pt}
				3BF & $S_{3BF} = \displaystyle{\int_{M} (\langle \overline{B}, \mathcal{F} \rangle + \langle \overline{C}, \mathcal{G} \rangle + \langle \overline{D}, \mathcal{H} \rangle)}$ \\[1.5mm]
				\specialrule{0.5pt}{1pt}{6pt}
				3BFYM  &$S_{3BFYM} = \displaystyle{\int_{M} (\langle \overline{B}, \mathcal{F} \rangle +  e^2 \langle \overline{B}, \ast \overline{B} \rangle + \langle \overline{C}, \mathcal{G} \rangle +  e^2 \langle \overline{C}, \ast \overline{C} \rangle+  }$\\[3mm]
				\qquad  & $\langle \overline{D}, \mathcal{H} \rangle + e^2 \langle \overline{D}, \ast \overline{D} \rangle)$ \\[1.5mm]
				\specialrule{0.5pt}{1pt}{6pt}
				3YM& $S_{3YM} = - \dfrac{\sigma}{4e^2} \displaystyle{\int_{M} (\langle \mathcal{F} , \ast \mathcal{F}\rangle + \langle \mathcal{G} , \ast \mathcal{G}\rangle + \langle \mathcal{H}, \ast \mathcal{H} \rangle) }$   \\[1.5mm]
				\specialrule{0.5pt}{2pt}{1pt} \specialrule{1pt}{1pt}{0pt} 
		\end{tabular} } 
	\end{table}
\end{center}

The layout of the paper is summarized in the following outline:
\begin{enumerate}[1. ]
	\item In section \ref{sub2}, we review the relevant algebraic tools involved in the description of higher YM and BFYM theories, and generalize the $G$-invariant bilinear forms from a differential crossed module to a nice differential 2-crossed module. Besides, we  introduce seven key maps for these higher algebras, which are crucial for the construction of the higher gauge theories. 
	\item In section \ref{sub3}, we define the differential forms valued in the higher algebras, and we prove four propositions and present seven related maps for the spaces of Lie algebra valued differential forms, which will play major roles in the our calculations.
	\item In section \ref{YMBFYM}, we review an important result that the YM theory is classically equivalent to the BFYM theory, and prove the gauge symmetry of the BF and BFYM theories.
	\item In section \ref{HBFYM}, we present the two main results of this paper.  Firstly, in subsection \ref{2BFYM} we introduce the 2BF and 2YM theories, and construct the 2BFYM action, which can define the dynamics of the 2YM theory. We prove the gauge symmetry of 2BFYM theory under the thin and fat gauge transformations, respectively.
	Secondly, in subsection \ref{3BFYM}, we introduce the 3BF and 3YM theories, and 
	establish the 3BFYM action, which is classically equivalent to the 3YM theory. Similarly, we also prove the gauge symmetry of 3BFYM theory under three gauge transformations, respectively.
\end{enumerate}

\section{Preliminaries}\label{sub2}
\subsection{Lie (2-)crossed modules and differential (2-)crossed modules} 
In this section, we first give the necessary definitions of Lie crossed modules, Lie 2-crossed modules, and their infinitesimal versions. Then, we are going to introduce the $G$-invariant forms in these algebraic structures, which will be needed for the construction of the higher BFYM actions. 
For more details, refer the readers to  \cite{Faria_Martins_2011, 2002hep.th....6130B, Beaz, Crans, Brown, Kamps20022groupoidEI, Mutlu1998FREENESSCF, Roberts2007TheIA}.

\paragraph{Lie pre-crossed modules}
A Lie pre-crossed module $\left(H,G;\alpha,\vartriangleright \right)$ consists of two Lie groups $H$, $G$ together with a Lie group map $\alpha: H  \longrightarrow G$ and a smooth left action $\vartriangleright$ of G on H by automorphisms such that:
\begin{equation}
	\alpha \left(g \vartriangleright h\right) = g \alpha \left(h\right) g^{-1},
\end{equation}
for each $g \in G $ and $h \in H$, and  a Peiffer commutator $\lbrack\lbrack \cdot , \cdot \rbrack\rbrack : H \times H \longrightarrow H $ defined by
\begin{equation}
	\lbrack\lbrack h , h' \rbrack\rbrack = h h' h^{-1}\left(\alpha \left(h\right) \vartriangleright h'^{-1}\right),
\end{equation}
for each $h, h' \in H$.
%	\end{definition}

\paragraph{Lie crossed modules}
A Lie pre-crossed module $\left(H,G;\alpha,\vartriangleright \right)$ is said to be  a Lie crossed module (or a strict Lie 2-group), if  its Peiffer commutator is trivial, i.e.
\begin{equation}
\alpha \left(h\right) \vartriangleright h' = h h' h^{-1}.
\end{equation}
Besides, the map $\left(h_1 , h_2\right) \in H \times H \longrightarrow \lbrack\lbrack h , h' \rbrack\rbrack \in H $, called the Peiffer pairing, is G-equivariant, i.e.
\begin{equation}
g \vartriangleright \lbrack\lbrack h_1 , h_2 \rbrack\rbrack = \lbrack\lbrack g \vartriangleright h_1 , g \vartriangleright h_2 \rbrack\rbrack ,
\end{equation}
for each $ h_1 , h_2 \in H$ and $g \in G$. Moreover, $\lbrack\lbrack h_1 , h_2 \rbrack\rbrack =1_H $ if either $h_1$ or $h_2$ is $1_H$. A simple example of the Lie crossed module is $G = H = U(N) $ with $\alpha $ the identity map and the $ \vartriangleright $ the adjoint action.

There is an infinitesimal version of the Lie pre-crossed module.   
In order not to introduce additional notations, we use the same letters $\alpha$ and $\vartriangleright$ for counterparts in the infinitesimal version. 
%Similarly, the Lie crossed module has a corresponding differential crossed module, also called a Lie 2-algebra. 
The convention also applies to the Lie (2-)crossed modules.

\paragraph{Differential pre-crossed modules}
A differential pre-crossed module $(\mathcal{h}, \mathcal{g}; \alpha, \vartriangleright)$ consists of two Lie algebras $\mathcal{h}$, $\mathcal{g}$ together with a Lie algebra map $\alpha : \mathcal{h} \longrightarrow \mathcal{g}$ and a left action $\vartriangleright$ of $\mathcal{g}$ on $\mathcal{h}$ by derivations such that
\begin{equation}\label{XY}
\alpha(X\vartriangleright Y ) =\left[ X, \alpha(Y) \right],
\end{equation}
for each $X \in \mathcal{g}$ and $ Y \in \mathcal{h}$,
and a Peiffer commutator $\lbrack\lbrack \  ,\  \rbrack\rbrack : \mathcal{h} \times \mathcal{h} \longrightarrow \mathcal{h} $ defined by
\begin{equation}\label{jkh}
\lbrack\lbrack Y , Y'\rbrack\rbrack = \left[Y,Y'\right] - \alpha(Y)\vartriangleright Y',
\end{equation}
for each $Y, Y' \in \mathcal{h}$.
%	\end{definition}

\paragraph{Differential crossed modules}
A differential pre-crossed module $(\mathcal{h}, \mathcal{g}; \alpha, \vartriangleright)$ is said to be  a differential crossed module (or a strict Lie 2-algebra), if its Peiffer commutator vanish, i.e.
\begin{equation}\label{ YY'}
\alpha(Y)\vartriangleright Y'=\left[Y,Y'\right].
\end{equation}
Moreover, the map $(Y_1,Y_2) \in \mathcal{h} \times \mathcal{h} \longrightarrow 	\lbrack\lbrack Y_1 , Y_2\rbrack\rbrack  \in \mathcal{h}$, called the Peiffer pairing, is $\mathcal{g}$-equivariant, i.e.
\begin{equation}
X \vartriangleright \lbrack\lbrack Y_1 , Y_2\rbrack\rbrack = \lbrack\lbrack X \vartriangleright Y_1 , Y_2 \rbrack\rbrack +\lbrack\lbrack  Y_1 , X \vartriangleright Y_2 \rbrack\rbrack,
\end{equation}
and the map $(X,Y)\in \mathcal{g} \times \mathcal{h} \longrightarrow X \vartriangleright Y \in \mathcal{h}$ is bilinear, i.e.
\begin{align}
X \vartriangleright \left[Y_1 ,  Y_2\right]  &= \left[ X \vartriangleright Y_1 , Y_2\right] +\left[  Y_1 , X \vartriangleright Y_2\right],\label{XY_1Y_2}\\
\left[X_1, X_2\right] \vartriangleright Y &= X_1 \vartriangleright (X_2 \vartriangleright Y) - X_2 \vartriangleright(X_1 \vartriangleright Y).\label{X_1X_2Y}
\end{align}
The differential version of the above simple example is evidently $\mathcal{g} = \mathcal{h} = \mathcal{u} (N) $ with $\vartriangleright $ being the adjoint action and $ \alpha $ the identity map. 

\paragraph{Lie 2-crossed module}
A Lie 2-crossed module $(L, H, G;\beta, \alpha, \vartriangleright, \left\{ \ ,\  \right\})$ is given by a complex of Lie groups
$$L \stackrel{\beta}{\longrightarrow}H \stackrel{\alpha}{\longrightarrow}G$$
together with a smooth left action $\vartriangleright$ by automorphisms of $G$ on $L$ and $H$ (and on $G$ by conjugation), i.e.
\begin{equation}
g \vartriangleright (e_1 e_2)=(g \vartriangleright e_1)(g \vartriangleright e_2), \ \ \ (g_1 g_2)\vartriangleright e = g_1 \vartriangleright (g_2 \vartriangleright e),
\end{equation}
for each $g, g_1, g_2\in G, e, e_1, e_2\in H $ or $L$, and a $G$-equivariant smooth function $ \left\{ \ ,\  \right\} :H \times H \longrightarrow L $, called the Peiffer lifting, such that
\begin{equation}
g \vartriangleright \left\{ h_1, h_2 \right\} = \left\{g \vartriangleright h_1, g \vartriangleright h_2\right\},
\end{equation}
for each $h_1, h_2\in H$. Moreover, there is a left action of $H$ on $L$ by automorphisms $\vartriangleright' $ which is defined by
\begin{equation}
h \vartriangleright' l = l \left\{ \beta (l)^{-1}, h \right\}.
\end{equation}
This action $\vartriangleright'$ together with the homomorphism $ \beta$ defines a crossed module $(L, H; \beta, \vartriangleright')$. 
The Lie 2-crossed module satisfies many properties, but we will not give the detailed introduction here. For more details, see \cite{Roberts2007TheIA}.

\paragraph{Differential 2-crossed modules}
A differential 2-crossed module$(\mathcal{l},\mathcal{h}, \mathcal{g}; \beta, \alpha,\vartriangleright, \left\{ , \right\})$ is given by a complex of Lie algebras:
$$\mathcal{l}\stackrel{\beta}{\longrightarrow} \mathcal{h} \stackrel{\alpha}{\longrightarrow} \mathcal{g},$$
together with left action $\vartriangleright $ by derivations of $\mathcal{g}$ on $\mathcal{l}, \mathcal{h}, \mathcal{g}$ (on the latter by the adjoint representation), and a $\mathcal{g}$-equivariant bilinear map $\left\{ \ ,\  \right\}:\mathcal{h} \times \mathcal{h} \longrightarrow \mathcal{l}$, the Peiffer lifting, such that
\begin{equation}\label{12}
X\vartriangleright \left\{ Y_1,Y_2\right\} = \left\{X\vartriangleright Y_1,Y_2\right\} + \left\{Y_1, X\vartriangleright Y_2\right\},
\end{equation}
for each $X \in \mathcal{g}$ and $Y_1,Y_2 \in \mathcal{h}$. 

The differential 2-crossed module satisfies the following properties:
\begin{enumerate}
\item 	$\mathcal{l}\stackrel{\beta}{\longrightarrow} \mathcal{h} \stackrel{\alpha}{\longrightarrow} \mathcal{g}$ is a complex of $\mathcal{g}$-modules satisfying $\alpha \circ \beta=0$;

\item $\beta \left\{Y_1,Y_2\right\} =\lbrack\lbrack Y_1 , Y_2\rbrack\rbrack $, for each $Y_1, Y_2\in \mathcal{h}$, where $\lbrack\lbrack Y_1 , Y_2\rbrack\rbrack=\left[ Y_1, Y_2\right] - \alpha (Y_1) \vartriangleright Y_2$;

\item $ \left[Z_1, Z_2\right]= \left\{ \beta (Z_1), \beta (Z_2)\right\}$;

\item $\left\{ \left[Y_1 ,Y_2\right], Y_3 \right\}= \alpha (Y_1) \vartriangleright \left\{Y_2,Y_3\right\}+\left\{Y_1,\left[Y_2,Y_3\right]\right\}-\alpha (Y_2)\vartriangleright \left\{Y_1,Y_3\right\}-\left\{Y_2,\left[Y_1,Y_3\right]\right\}$. This is the same as :
$$\left\{\left[Y_1,Y_2\right],Y_3\right\}= \left\{\alpha (Y_1)\vartriangleright Y_2, Y_3\right\} - \left\{ \alpha (Y_2)\vartriangleright Y_1, Y_3\right\} - \left\{Y_1, \beta \left\{ Y_2, Y_3\right\}\right\} + \left\{Y_2, \beta\left\{Y_1,Y_2 \right\}\right\};$$

\item $ \left\{Y_1,\left[Y_2,Y_3\right]\right\}= \left\{ \beta\left\{Y_1,Y_2 \right\},Y_3  \right\}-\left\{ \beta\left\{Y_1,Y_3 \right\},Y_2 \right\}$;

\item $\left\{ \beta (Z) , Y \right\} +\left\{Y, \beta (Z)\right\} =- (\alpha (Y)\vartriangleright Z)$, 
\end{enumerate}
for each $Y, Y_1,Y_2, Y_3 \in \mathcal{h} $ and $Z, Z_1,Z_2 \in \mathcal{l}$.
%	\end{definition}

Analogously to the Lie 2-crossed module case, there is a left action of $\mathcal{h}$ on $\mathcal{l}$ which is defined by
\begin{equation}\label{YZ}
Y\vartriangleright'Z= -\left\{ \beta(Z),Y\right\}.
\end{equation}
This together with the homomorphism $\beta$ defines a differential crossed module $(\mathcal{l}, \mathcal{h}; \beta, \vartriangleright')$.

\subsection{$G$-invariant bilinear forms in the differential (2-)crossed module}\label{G2}
%As discussed in \cite{Martins:2010ry}, the $G$-invariant bilinear forms in the differential crossed module are used to construct the 2-gauge theory. 
%In order to develop the 3-gauge theory, we need to introduce the $G$-invariant bilinear forms in the differential 2-crossed module.

In this paper, we consider a nice 2-crossed module $(L, H, G;\beta, \alpha, \vartriangleright, \left\{ \ ,\  \right\})$, where $(H, G; \alpha, \vartriangleright)$ is a Lie crossed module.
Similarly, a differential 2-crossed module $(\mathcal{l},\mathcal{h},\mathcal{g};\beta,\alpha,\vartriangleright,\left\{ \ ,\  \right\})$ is nice if and only if $(\mathcal{h}, \mathcal{g}; \alpha, \vartriangleright)$ is a  differential crossed module.
Thus, the bilinear forms in the differential crossed module will be included in that of  the differential 2-crossed module. 
We just need to study the bilinear forms in the latter.

For a nice Lie 2-crossed module $(L, H, G;\beta, \alpha, \vartriangleright, \left\{ \ ,\  \right\})$ and the corresponding infinitesimal version $(\mathcal{l},\mathcal{h},\mathcal{g};\beta,\alpha,\vartriangleright,\left\{ \ ,\  \right\})$, there are two mixed relations \cite{Martins:2010ry},
%Except for the $\mathcal{g}$-action on $\mathcal{h}$ and $\mathcal{l}$ via $\vartriangleright$, we also have a $G$-action on $\mathcal{h}$ and $\mathcal{l}$, 
induced by the action via $\vartriangleright$ on $H$ and also denoted by $\vartriangleright$,
\begin{align}
\alpha (g \vartriangleright Y) = g \alpha(Y) g^{-1},\ \ \  %\label{gg}\\
\alpha (h) \vartriangleright Y= h Y h^{-1},\label{mixed1}
\end{align}
for each $g \in G, h \in H$ and $Y \in \mathcal{h}$.
%	i.e. satisfying
%	\begin{align}
%		\beta (h \vartriangleright' l) &= h \beta (l) h^{-1},  \ \ \ \ \ \ \ \forall h \in H, l \in L,\\
%		\beta (l) \vartriangleright' l' &= l l' l^{-1}, \ \ \ \ \ \ \ \ \ \ \ \ \forall l , l' \in L.
%	\end{align}
Thinking of the Lie crossed module $(L, H; \beta, \vartriangleright')$, there are also two mixed relations corresponding to the differential crossed module $(\mathcal{l}, \mathcal{h}; \beta, \vartriangleright')$, induced by the action via $\vartriangleright'$ on $L$ and also denoted by $\vartriangleright'$,
\begin{align}
\beta (h \vartriangleright' Z) = h \beta (Z) h^{-1},\ \ \ 
\beta (l) \vartriangleright' Z = l Z l^{-1},\label{mixed2}
\end{align}
for each $h \in H, l \in L$ and $ Z \in \mathcal{l}$.

\begin{definition}[$G$-invariant non-degenerate symmetric bilinear  forms]
Let us consider a Lie 2-crossed module $(L,H,G;\beta,\alpha,\vartriangleright,\left\{ \ ,\  \right\})$ and $(\mathcal{l},\mathcal{h},\mathcal{g};\beta,\alpha,\vartriangleright,\left\{ \ ,\  \right\})$ be the associated differential 2-crossed module. A $G$-invariant non-degenerate symmetric bilinear form in $(\mathcal{l},\mathcal{h},\mathcal{g};\beta,\alpha,\vartriangleright,\left\{ \ ,\  \right\})$ is given by a triple of non-degenerate symmetric bilinear forms $\langle \ , \ \rangle_\mathcal{g}$ in $\mathcal{g}$ , $\langle \ , \ \rangle_\mathcal{h}$ in $\mathcal{h}$ and $\langle \ , \ \rangle_\mathcal{l}$ in $\mathcal{l}$ such that
\begin{enumerate}
\item  $\langle \ , \ \rangle_\mathcal{g}$ is $G$-invariant, i.e.
$$ \langle gXg^{-1}, gX'g^{-1} \rangle_\mathcal{g}= \langle X,X'\rangle_\mathcal{g}, \ \ \ \ \ \ \ \ \forall g \in G, \ X, X' \in \mathcal{g};$$
\item  $\langle \ , \ \rangle_\mathcal{h}$ is $G$-invariant, i.e.
$$ \langle g\vartriangleright Y,  g\vartriangleright Y' \rangle_\mathcal{h}= \langle Y,Y'\rangle_\mathcal{h}, \ \ \ \ \ \ \ \forall g\in G,\ Y,Y'\in \mathcal{h};$$
\item  $\langle \ , \ \rangle_\mathcal{l}$ is $G$-invariant, i.e.
$$ \langle g\vartriangleright Z,  g\vartriangleright Z' \rangle_\mathcal{l}= \langle Z,Z'\rangle_\mathcal{l}, \ \ \ \ \ \ \ \forall g\in G,\ Z,Z'\in \mathcal{l}.$$
\end{enumerate}
\end{definition}

Note that $\langle \ , \ \rangle_\mathcal{h}$ is necessarily $H$-invariant. Since
$$\langle h Y h^{-1}, h Y' h^{-1} \rangle_{\mathcal{h}} = \langle \alpha(h) \vartriangleright Y, \alpha(h) \vartriangleright Y' \rangle_{\mathcal{h}} = \langle Y, Y'\rangle_{\mathcal{h}},$$
for each $h \in H$,
by using the mixed relation \eqref{mixed1}.
In the case when the Peiffer lifting or the map $\beta$ is trivial consequently, $\langle \ , \ \rangle_\mathcal{l}$ is $H$-invariant, i.e.
$$ \langle h\vartriangleright' Z, h \vartriangleright' Z' \rangle_\mathcal{l} = \langle Z,Z'\rangle_\mathcal{l}.$$
Besides, $\langle \ , \ \rangle_\mathcal{l}$ is necessarily $L$-invariant based on the $H$-invariance of $\langle \ , \ \rangle_\mathcal{l}$, i.e.
$$\langle lZl^{-1}, lZ'l^{-1} \rangle_\mathcal{l} = \langle \beta (l) \vartriangleright' Z, \beta (l) \vartriangleright' Z' \rangle_\mathcal{l} = \langle Z,Z'\rangle_\mathcal{l},$$
for each $l\in L$, by using the mixed relation \eqref{mixed2}. See \cite{Radenkovic:2019qme} for more details.

There are no compatibility conditions between the symmetric bilinear forms  $\langle \ , \ \rangle_\mathcal{g}$ ,$\langle \ , \ \rangle_\mathcal{h}$ and $\langle \ , \ \rangle_\mathcal{l}$. Any representation of $G$ can be made unitary if $G$ is a compact group \cite{WFJH}. There exists a rich class of examples of the Lie crossed module in which $G$ is compact and $H$ is a non-abelian group, by constructing them from chain complexes of vector spaces\cite{Martins:2010ry}. In likewise, there is also a good Lie 2-crossed module with the desired properties constructed from chain complexes of vector spaces  \cite{ Faria_Martins_2011, Martins:2010ry, Kamps20022groupoidEI}. 
%The following Lemma is significant in this article.
%\begin{theorem}
%	Let $(L,H,G;\beta,\alpha,\vartriangleright,\left\{ \ ,\  \right\})$  be a Lie 2-crossed module with the group $G$ being compact in the real case, or having a compact real form in the complex case. Then one can construct $G$-invariant  non-degenerate symmetric bilinear forms $\langle \ , \ \rangle_\mathcal{g}$, $\langle \ , \ \rangle_\mathcal{h}$ and $\langle \ , \ \rangle_\mathcal{l}$ in the associated differential 2-crossed module $(\mathcal{l},\mathcal{h},\mathcal{g};\beta,\alpha,\vartriangleright,\left\{ \ ,\  \right\})$. Furthermore these forms can be chosen to be positive definite.
%\end{theorem}

These invariance conditions imply that:
\begin{align}
\langle [X,X'], X''\rangle_\mathcal{g}&=-\langle X',[X,X'']\rangle_\mathcal{g},\label{XXX}\\
\langle [Y,Y'], Y''\rangle_\mathcal{h}&=-\langle Y',[Y,Y'']\rangle_\mathcal{h},\label{YYY}\\
\langle [Z,Z'], Z''\rangle_\mathcal{l}&=-\langle Z',[Z,Z'']\rangle_\mathcal{l}.\label{ZZZ}
\end{align}
Besides, there are some important maps used in this paper.
Define two bilinear antisymmetric maps $\sigma:\mathcal{h} \times \mathcal{h} \longrightarrow \mathcal{g}$ by 
\begin{align}
\langle \sigma(Y,Y'), X\rangle_\mathcal{g}=-\langle Y, X\vartriangleright Y'\rangle_\mathcal{h},
\end{align}
and   $\kappa:\mathcal{l} \times \mathcal{l} \longrightarrow \mathcal{g}$ by 
\begin{align}
\langle \kappa(Z,Z'), X\rangle_\mathcal{g}=-\langle Z, X\vartriangleright Z'\rangle_\mathcal{l},
\end{align}
for each $X \in \mathcal{g}$, $Y, Y' \in \mathcal{h}$ and $Z, Z' \in \mathcal{l}$.
Due to $\sigma(Y',Y)=-\sigma(Y,Y')$ and $\kappa(Z',Z)=-\kappa(Z,Z')$, we have
\begin{align}
\langle Y, X\vartriangleright Y'\rangle_\mathcal{h}&=-\langle Y', X\vartriangleright Y \rangle_\mathcal{h}=-\langle X\vartriangleright Y, Y' \rangle_\mathcal{h},\label{YXY'}\\
\langle Z, X\vartriangleright Z'\rangle_\mathcal{l}&=-\langle Z', X\vartriangleright Z \rangle_\mathcal{l}=-\langle X\vartriangleright Z, Z' \rangle_\mathcal{l}.\label{ZXZ'}
\end{align}
Then, define two bilinear maps $\eta_1: \mathcal{l} \times \mathcal{h} \longrightarrow \mathcal{h}$ and $\eta_2: \mathcal{l} \times \mathcal{h} \longrightarrow \mathcal{h}$ by
%To obtain the 3-form Yang-Mills equations, let's construct an additional condition:
\begin{align}\label{YY'Z}
\langle \left\{ Y, Y' \right\}, Z \rangle_\mathcal{l} = -\langle Y', \eta_1(Z, Y) \rangle_\mathcal{h} = -\langle Y, \eta_2(Z, Y') \rangle_\mathcal{h},
\end{align}
and a map $\alpha^*:\mathcal{g} \longrightarrow \mathcal{h}$ in \cite{2002hep.th....6130B} by
\begin{align}
\langle Y, \alpha^*(X)\rangle_\mathcal{h}=\langle \alpha(Y), X\rangle_\mathcal{g},
\end{align}
and a map $\beta^*:\mathcal{h} \longrightarrow \mathcal{l}$ by 
\begin{align}
\langle Z, \beta^*(Y)\rangle_\mathcal{l}=\langle \beta(Z), Y\rangle_\mathcal{h}.
\end{align}
Finally, define a trilinear map $\theta: \mathcal{h} \times \mathcal{h} \times \mathcal{l} \longrightarrow \mathcal{g} $ satisfying
\begin{align}
\langle \theta (Y, Y', Z) , X \rangle_\mathcal{g} = - \langle Z, \left\{X \vartriangleright Y ,  Y'\right\} \rangle_\mathcal{l}.
\end{align}
These maps will play important roles in our calculations. See \cite{TRMV, Radenkovic:2019qme} for more properties.

\section{Lie algebra valued differential forms} \label{sub3}
In this section we will mostly follow the original notations and definitions of \cite{doi:10.1063/1.4870640}.
Given a Lie algebra $\mathcal{g}$, there is a vector space $\Lambda^k (M,\mathcal{g})$
of $\mathcal{g}$-valued differential $k$-forms on $M$. For $A\in \Lambda^k (M, \mathcal{g})$, we write $A=\sum\limits_{a}A^a X_a $ where $A^a$ is a scalar differential $k$-form and  $X_a$ is an element in $\mathcal{g}$. 
Here, we assume $\mathcal{g}$ to be a matrix Lie algebra. Then, we have $\left[X, X'\right]=XX'-X'X$ for each $X, X'\in \mathcal{g}$.
For $A=\sum\limits_{a}A^a X_a \in \Lambda^{k_1} (M, \mathcal{g}) $, $A'=\sum\limits_{b}A'^b X_b \in \Lambda^{k_2} (M, \mathcal{g}) $, 
define
\begin{align*}
A \wedge A' :=\sum\limits_{a,b}A^a \wedge A'^b X_a X_b, \ \ \   dA := \sum\limits_{a} dA^a X_a, \\
A \wedge^{\left[\ ,\  \right]} A' :=\sum\limits_{a,b}A^a \wedge A'^b \left[X_a, X_b\right], 
\end{align*}
then there is an identity
$$A\wedge^{\left[\ ,\  \right]} A'=A\wedge A'-(-1)^{k_1 k_2}A'\wedge A .$$
Similar arguments apply to $\mathcal{h}$ and $\mathcal{l}$.
Besides, for $B=\sum\limits_{a} B^a Y_a \in \Lambda^{t_1}(M,\mathcal{h})$, $B'=\sum\limits_{b} B'^b Y_b \in \Lambda^{t_2}(M,\mathcal{h})$ with $Y_a$, $Y_b \in \mathcal{h}$, define
\begin{align*}
A\wedge^{\vartriangleright }B :=& \sum\limits_{a,b} A^a \wedge B^b X_a \vartriangleright Y_b\ ,\\
B\wedge^{\left\{ \ ,\  \right\}} B':=& \sum\limits_{a,b} B^a \wedge B'^b \left\{Y_a,Y_b\right\},\\
B\wedge^{\lbrack\lbrack \ ,\  \rbrack\rbrack} B':=& \sum\limits_{a,b}B^a \wedge B'^b \lbrack\lbrack Y_a , Y_b \rbrack\rbrack\ ,\\
\alpha (B) :=& \sum\limits_{a}B^a \alpha(Y_a).
\end{align*}
For $C=\sum\limits_{a}C^a Z_a \in \Lambda^{q_1} (M, \mathcal{l})$, $C'=\sum\limits_{b}C'^b Z_b \in \Lambda^{q_2} (M, \mathcal{l})$, define
\begin{align*}
A\wedge^\vartriangleright C :=\sum\limits_{a,b} A^a \wedge C^b X_a \vartriangleright Z_b, \ \  \beta(C):=\sum\limits_{a}C^a \beta(Z_a),\\
B\wedge^{\vartriangleright'}C:=\sum\limits_{a,b}B^a \wedge C^b Y_a\vartriangleright' Z_b\
\end{align*}
where $Y_a \vartriangleright' Z_b = -\left\{\beta(Z_b), Y_a\right\}$ by using \eqref{YZ}.

The following propositions will give some properties for the Lie algebra valued differential forms, corresponding to the identities of the differential (2-)crossed module.
\begin{proposition}
For $A\in \Lambda^k (M, \mathcal{g})$, $A_1 \in \Lambda^{k_1} (M, \mathcal{g})$, $A_2 \in \Lambda^{k_2} (M, \mathcal{g})$, $B \in \Lambda^t (M, \mathcal{h})$, $B_1 \in \Lambda^{t_1} (M, \mathcal{h})$ and $ {B_2} \in \Lambda^{t_2} (M, \mathcal{h})$, have
\begin{enumerate}
\item $ \alpha (A\wedge^\vartriangleright B)=A\wedge^{\left[\ ,\ \right]}\alpha(B)$;

\item $\alpha (B_1)\wedge^\vartriangleright B_2 = B_1\wedge^{\left[\ ,\ \right]}B_2$;

\item $A\wedge^\vartriangleright (B_1\wedge^{\left[\ ,\ \right]} B_2)=(A\wedge^\vartriangleright B_1)\wedge^{\left[\ ,\ \right]}B_2+ (-1)^{kt_1}B_1\wedge^{\left[\ ,\ \right]}(A\wedge^\vartriangleright B_2)$;

\item $(A_1\wedge^{\left[\ ,\ \right]}A_2)\wedge^\vartriangleright B=A_1 \wedge^\vartriangleright(A_2\wedge^\vartriangleright B)+(-1)^{k_1 k_2 +1} A_2\wedge^\vartriangleright(A_1\wedge^\vartriangleright B)$.
\end{enumerate}
\end{proposition}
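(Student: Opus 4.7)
The plan is to expand every form in a basis, apply the corresponding pointwise identity from the differential crossed module, and carefully track the signs coming from graded commutativity of the wedge product. Writing $A=\sum_a A^a X_a$, $A_i=\sum_a A_i^a X_a$, $B=\sum_a B^a Y_a$, $B_i=\sum_a B_i^a Y_a$, each of the four identities reduces to a scalar-form identity multiplied by a Lie-algebraic identity in $(\mathcal{h},\mathcal{g};\alpha,\vartriangleright)$, so the only book-keeping issue is the sign picked up when we reorder scalar differential forms.

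For item (1), I would simply expand $\alpha(A\wedge^\vartriangleright B)=\sum_{a,b} A^a\wedge B^b\,\alpha(X_a\vartriangleright Y_b)$ and invoke \eqref{XY} to replace $\alpha(X_a\vartriangleright Y_b)$ by $[X_a,\alpha(Y_b)]$; no reordering is needed. Item (2) is likewise immediate from the differential crossed-module relation \eqref{ YY'}: $\alpha(Y_a)\vartriangleright Y_b=[Y_a,Y_b]$ inside the sum defining $\alpha(B_1)\wedge^\vartriangleright B_2$.

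The two nontrivial identities are (3) and (4), where the sign factors arise. For (3), I would expand the left-hand side as $\sum_{a,b,c} A^a\wedge B_1^b\wedge B_2^c\, X_a\vartriangleright[Y_b,Y_c]$ and apply \eqref{XY_1Y_2} to split this into two sums. The first sum is already $(A\wedge^\vartriangleright B_1)\wedge^{[\,,\,]}B_2$, while in the second sum I need to move $A^a$ (a $k$-form) past $B_1^b$ (a $t_1$-form) to recognise the expression as $B_1\wedge^{[\,,\,]}(A\wedge^\vartriangleright B_2)$; this produces exactly the factor $(-1)^{kt_1}$. For (4), the same strategy with \eqref{X_1X_2Y} yields a sum of two terms, and moving $A_1^a$ past $A_2^b$ in the second term produces $(-1)^{k_1k_2}$, which combined with the $-1$ from the commutator identity gives the stated sign $(-1)^{k_1k_2+1}$.

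The main obstacle, if any, is purely bookkeeping: one must be sure that only the scalar-form reordering contributes a sign (the Lie-algebraic factors are just elements, not graded objects), and that the wedge of scalar forms is associative so that the triple sums can be regrouped freely. Once that is observed, all four identities are mechanical consequences of \eqref{XY}, \eqref{ YY'}, \eqref{XY_1Y_2} and \eqref{X_1X_2Y} applied termwise under the sum.
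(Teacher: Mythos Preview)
Your proposal is correct and follows exactly the approach indicated in the paper: the paper's proof simply states that the identities follow easily from \eqref{XY}, \eqref{ YY'}, \eqref{XY_1Y_2} and \eqref{X_1X_2Y}, and your basis-expansion with sign tracking is precisely the mechanical verification of this. If anything, you have supplied more detail than the paper itself.
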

\begin{proof}
We can get these identities easily by using \eqref{XY}, \eqref{ YY'}, \eqref{XY_1Y_2} and \eqref{X_1X_2Y}.
\end{proof}

\begin{proposition}
\begin{enumerate}
\item For $A\in \Lambda^k (M,\mathcal{g})$,$A'\in  \Lambda^{k'}(M,\mathcal{g})$ and $C\in \Lambda^* (M,\mathcal{l})$,
\begin{align}
\beta(A\wedge^\vartriangleright C)&=A\wedge^\vartriangleright \beta(C),\label{AC}\\
A\wedge^\vartriangleright A' &= A\wedge A'+(-1)^{kk'+1} A'\wedge A.
\end{align}

\item For $A\in \Lambda^k (M,\mathcal{g})$,$B_1 \in \Lambda^{t_1}(M,\mathcal{h})$,$B_2 \in \Lambda^{t_2} (M,\mathcal{h})$ and $W \in \Lambda^* (M,\mathcal{w})$ $(\mathcal{w}=\mathcal{g},\mathcal{h},\mathcal{l})$,
\begin{align}
d(A\wedge^\vartriangleright W)&=dA\wedge^\vartriangleright W + (-1)^k A \wedge^\vartriangleright dW,\label{AW}\\
d(B_1\wedge^{\left\{ \ ,\  \right\}}B_2)&=dB_1\wedge^{\left\{ \ ,\  \right\}}B_2+(-1)^{t_1} B_1 \wedge^{\left\{ \ ,\  \right\}}dB_2, \label{BB}\\
A\wedge^\vartriangleright (B_1\wedge^{\left\{ \ ,\  \right\}}B_2)&=(A\wedge^\vartriangleright B_1)\wedge^{\left\{ \ ,\  \right\}}B_2 + (-1)^{kt_1}B_1\wedge^{\left\{ \ ,\  \right\}}(A\wedge^\vartriangleright B_2),\label{ABB}
\end{align}
\end{enumerate}
\end{proposition}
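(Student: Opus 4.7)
My plan is to prove each identity by expanding both sides in bases of the underlying Lie algebras and reducing to a corresponding pointwise algebraic identity already established for the differential 2-crossed module, combined with the standard graded Leibniz rule and the graded-commutativity of wedge product on scalar-valued differential forms.

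For identity \eqref{AC}, I would write $A = \sum_a A^a X_a$ and $C = \sum_b C^b Z_b$, so that $A\wedge^\vartriangleright C = \sum_{a,b} (A^a \wedge C^b)\, X_a \vartriangleright Z_b$. Since $\beta$ is linear on $\mathcal{l}$, applying it reduces the claim to the pointwise identity $\beta(X_a \vartriangleright Z_b) = X_a \vartriangleright \beta(Z_b)$, which is exactly the statement that $\mathcal{l}\stackrel{\beta}{\longrightarrow}\mathcal{h}\stackrel{\alpha}{\longrightarrow}\mathcal{g}$ is a complex of $\mathcal{g}$-modules, i.e.\ the first structural property listed after \eqref{12}. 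For the second identity in part (1), since $\mathcal{g}$ acts on itself by the adjoint representation one has $X_a \vartriangleright X_b = [X_a, X_b]$, so $A \wedge^\vartriangleright A' = A \wedge^{[\ ,\ ]} A'$, and the stated relation follows directly from the identity $A\wedge^{[\ ,\ ]}A' = A\wedge A' - (-1)^{kk'}A'\wedge A$ displayed earlier in this section.

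For the Leibniz-type identities \eqref{AW} and \eqref{BB}, the basis expansion reduces the calculation to the ordinary graded Leibniz rule $d(A^a \wedge W^b) = dA^a \wedge W^b + (-1)^k A^a \wedge dW^b$ for scalar-valued forms, combined with the bilinearity of $\vartriangleright$ and of the Peiffer lifting $\{\,,\,\}$, and the fact that the constant symbols $X_a\vartriangleright Y_b$ and $\{Y_a, Y_b\}$ are not differentiated; this is essentially bookkeeping, and the argument for \eqref{AW} is uniform in the three target algebras $\mathcal{w}\in\{\mathcal{g},\mathcal{h},\mathcal{l}\}$.

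The most delicate identity is \eqref{ABB}. Expanding in bases gives $A\wedge^\vartriangleright(B_1\wedge^{\{\,,\,\}}B_2) = \sum_{a,b,c} A^a \wedge B_1^b \wedge B_2^c\, X_a \vartriangleright \{Y_b, Y_c\}$, and invoking the $\mathcal{g}$-equivariance \eqref{12} of the Peiffer lifting splits this sum into two pieces. The first piece is immediately $(A\wedge^\vartriangleright B_1)\wedge^{\{\,,\,\}} B_2$, while in the second piece I must commute the scalar $k$-form $A^a$ past the $t_1$-form $B_1^b$, which by graded-commutativity produces a Koszul sign $(-1)^{k t_1}$ and yields $(-1)^{kt_1} B_1 \wedge^{\{\,,\,\}} (A\wedge^\vartriangleright B_2)$. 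The main obstacle is precisely this sign tracking; once the Koszul convention is applied consistently the identity follows, and no further structural input beyond \eqref{12} is needed.
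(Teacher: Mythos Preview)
Your proposal is correct: each identity indeed reduces, after basis expansion, to the corresponding pointwise algebraic identity (the $\mathcal{g}$-equivariance of $\beta$, the adjoint action of $\mathcal{g}$ on itself, the graded Leibniz rule, and the $\mathcal{g}$-equivariance \eqref{12} of the Peiffer lifting), and your sign tracking in \eqref{ABB} via graded commutativity is accurate. The paper itself does not supply a proof of this proposition but defers to reference \cite{doi:10.1063/1.4870640}; your argument is the standard direct verification and is in fact more explicit than what the paper provides.
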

\begin{proof}
For a rigorous proof of this theorem, refer the readers to \cite{doi:10.1063/1.4870640}.
\end{proof}

Furthermore, we have $G$-invariant forms in $\Lambda^k(M, \mathcal{g})$, $\Lambda^t(M, \mathcal{h})$ and $\Lambda^q(M, \mathcal{l})$ induced by $\langle \ ,\  \rangle_\mathcal{g}$, $\langle\ ,\  \rangle_\mathcal{h}$ and $\langle\ ,\  \rangle_\mathcal{l}$ and we denote them by $\langle \ , \ \rangle$ defined as follows
$$\langle A, A'\rangle := \sum\limits_{a,b}A^a \wedge A'^b \langle X_a, X_b\rangle_\mathcal{g}, \quad \langle B, B'\rangle := \sum\limits_{a,b}B^a \wedge B'^b \langle Y_a, Y_b\rangle_\mathcal{h}, $$
$$ \langle C, C'\rangle := \sum\limits_{a,b}C^a \wedge C'^b \langle Z_a, Z_b\rangle_\mathcal{l}.$$
Then, we have
$$ \langle A, A'\rangle=(-1)^{k_1 k_2} \langle A', A\rangle, \qquad \langle B, B'\rangle=(-1)^{t_1 t_2} \langle B', B\rangle, $$$$\qquad \langle C, C'\rangle=(-1)^{q_1 q_2} \langle C', C\rangle,$$ 
using the symmetries of $\langle \ ,\  \rangle_\mathcal{g}$, $\langle \ ,\  \rangle_\mathcal{h}$ and $\langle \ ,\  \rangle_\mathcal{l}$.

The following propositions will give some properties for the $G$-invariant forms of the Lie algebra valued differential forms.
\begin{proposition}
For $A_i \in \Lambda^{k_i}(M,\mathcal{g})$, $B_j\in \Lambda^{t_j}(M,\mathcal{h})$ and $C_k\in \Lambda^{q_k}(M,\mathcal{l}) (i, j, k = 1, 2, 3)$, we have 
\begin{align}
\langle A_1 \wedge^{[ \ ,\ ]}A_2, A_3\rangle &= (-1)^{k_1k_2+1}\langle A_2, A_1 \wedge^{[\ ,\ ]}A_3 \rangle,\label{A_1A_2A_3}\\
\langle B_1 \wedge^{[\ ,\ ]}B_2, B_3\rangle &= (-1)^{t_1t_2+1}\langle B_2, B_1 \wedge^{[\ ,\ ]}B_3 \rangle,\\
\langle C_1 \wedge^{[\ ,\ ]}C_2, C_3\rangle &= (-1)^{q_1q_2+1}\langle C_2, C_1 \wedge^{[\ ,\ ]}C_3 \rangle.
\end{align}
\end{proposition}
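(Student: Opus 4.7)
The plan is to prove all three identities by the same component-wise calculation, so I will carry out the argument for the $\mathcal{g}$-valued case in detail and remark that the $\mathcal{h}$- and $\mathcal{l}$-valued cases are identical upon replacing the bilinear form $\langle\ ,\ \rangle_{\mathcal{g}}$ and the invariance relation \eqref{XXX} by $\langle\ ,\ \rangle_{\mathcal{h}}$ with \eqref{YYY}, and by $\langle\ ,\ \rangle_{\mathcal{l}}$ with \eqref{ZZZ}, respectively.

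First I expand $A_i = \sum_{a_i} A_i^{a_i}X_{a_i}$ in a basis of $\mathcal{g}$ and apply the definitions of $\wedge^{[\ ,\ ]}$ and of the induced pairing $\langle\ ,\ \rangle$ to write
\begin{equation*}
\langle A_1\wedge^{[\ ,\ ]}A_2,\,A_3\rangle \;=\; \sum_{a,b,c} A_1^{a}\wedge A_2^{b}\wedge A_3^{c}\,\langle [X_a,X_b],X_c\rangle_{\mathcal{g}}.
\end{equation*}
The invariance identity \eqref{XXX} then lets me rewrite $\langle [X_a,X_b],X_c\rangle_{\mathcal{g}}=-\langle X_b,[X_a,X_c]\rangle_{\mathcal{g}}$, so the expression becomes
\begin{equation*}
-\sum_{a,b,c} A_1^{a}\wedge A_2^{b}\wedge A_3^{c}\,\langle X_b,[X_a,X_c]\rangle_{\mathcal{g}}.
\end{equation*}
Next I move the scalar form $A_2^{b}$ (of degree $k_2$) past $A_1^{a}$ (of degree $k_1$) using the graded commutativity of the exterior product of scalar forms, which produces the sign $(-1)^{k_1 k_2}$. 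Re-assembling the sums as the pairing of $A_2$ with $A_1\wedge^{[\ ,\ ]}A_3$ gives $(-1)^{k_1 k_2+1}\langle A_2,\,A_1\wedge^{[\ ,\ ]}A_3\rangle$, which is exactly the claim.

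For the second and third identities the structure is entirely parallel: I expand $B_j=\sum B_j^{b_j}Y_{b_j}$ and $C_k=\sum C_k^{c_k}Z_{c_k}$, use the ad-invariance \eqref{YYY} of $\langle\ ,\ \rangle_{\mathcal{h}}$ and \eqref{ZZZ} of $\langle\ ,\ \rangle_{\mathcal{l}}$ on the Lie-algebra brackets, and then commute the scalar forms past each other, picking up $(-1)^{t_1 t_2}$ and $(-1)^{q_1 q_2}$ respectively. There is really no substantive obstacle — the only thing to be careful about is the sign bookkeeping when transposing the scalar factors, since the total sign is the product of the $-1$ coming from the bilinear-form invariance and the commutation sign from the wedge product of ordinary forms. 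Because this proof is just a direct unpacking of definitions plus the ad-invariance identities already established, it can be written in a few lines without any additional ingredients.
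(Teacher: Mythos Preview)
Your proof is correct and follows exactly the approach indicated in the paper, which simply notes that the identities follow directly from \eqref{XXX}, \eqref{YYY} and \eqref{ZZZ}. You have merely spelled out the component expansion and the sign bookkeeping that the paper leaves implicit.
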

\begin{proof}
It is straightforward to get these identities by using \eqref{XXX}, \eqref{YYY} and \eqref{ZZZ}.
\end{proof}

\begin{proposition}
For $A \in \Lambda^{k}(M,\mathcal{g})$, $B_1 \in \Lambda^{t_1}(M, \mathcal{h})$, $B_2 \in \Lambda^{t_2}(M, \mathcal{h})$, $C_1 \in \Lambda^{q_1}(M, \mathcal{l})$ and $C_2 \in \Lambda^{q_2}(M, \mathcal{l})$,  we have
\begin{align}
\langle B_1, A\wedge^\vartriangleright B_2\rangle &=(-1)^{t_2(k+t_1)+kt_1+1}\langle B_2, A\wedge^\vartriangleright B_1 \rangle =(-1)^{kt_1+1}\langle A\wedge^\vartriangleright B_1, B_2\rangle,\label{B_1AB_2}\\
\langle C_1, A\wedge^\vartriangleright C_2\rangle &=(-1)^{q_2(k+q_1)+kq_1+1}\langle C_2, A\wedge^\vartriangleright C_1 \rangle =(-1)^{kq_1+1}\langle A\wedge^\vartriangleright C_1,  C_2\rangle.\label{C_1AC_2}
\end{align}
\end{proposition}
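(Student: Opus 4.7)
The plan is to prove both identities by reducing everything to the basis-expansion formulas given in Section~\ref{sub3}, applying the scalar invariance relations \eqref{YXY'} and \eqref{ZXZ'}, and then carefully tracking the sign changes that arise from reordering scalar-valued differential forms under the wedge product. Since the $\mathcal{h}$-case and the $\mathcal{l}$-case are formally identical (one just replaces $\langle\,,\,\rangle_\mathcal{h}$, $Y_a$, $t_j$ by $\langle\,,\,\rangle_\mathcal{l}$, $Z_a$, $q_k$), I would prove the first line in detail and remark that the second follows by the same argument.

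First I would write
\[
\langle B_1, A\wedge^\vartriangleright B_2\rangle \;=\; \sum_{a,b,c} B_1^a\wedge A^c\wedge B_2^b\;\langle Y_a, X_c\vartriangleright Y_b\rangle_\mathcal{h},
\]
using the definitions of $A\wedge^\vartriangleright B_2$ and of $\langle\,,\,\rangle$. Then I would apply \eqref{YXY'} inside the scalar form. For the second equality this directly gives $-\langle X_c\vartriangleright Y_a, Y_b\rangle_\mathcal{h}$, and the only wedge-product reordering required is $B_1^a\wedge A^c = (-1)^{k t_1}A^c\wedge B_1^a$, which recombines the factors into $\langle A\wedge^\vartriangleright B_1, B_2\rangle$ and produces exactly the sign $(-1)^{kt_1+1}$.

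For the first equality I would instead apply \eqref{YXY'} in the other form, getting $-\langle Y_b, X_c\vartriangleright Y_a\rangle_\mathcal{h}$. Now the scalar rearrangement is the more delicate step: I need to turn $B_1^a\wedge A^c\wedge B_2^b$ into $B_2^b\wedge A^c\wedge B_1^a$. I would do this in two moves, first pulling the $t_2$-form $B_2^b$ past $B_1^a\wedge A^c$ (sign $(-1)^{t_2(t_1+k)}$), and then commuting the $k$-form $A^c$ past $B_1^a$ inside the remaining product (sign $(-1)^{kt_1}$). Multiplying these by the minus sign from the invariance step produces the exponent $t_2(k+t_1)+kt_1+1$ stated in the proposition.

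No genuine obstacle is expected; the step that most invites a slip is the sign-bookkeeping in the first equality, because two separate transpositions of graded forms must be combined correctly with the sign coming from \eqref{YXY'}. I would double-check consistency by composing the two equalities: the relation $\langle B_2, A\wedge^\vartriangleright B_1\rangle = (-1)^{kt_2+1}\langle A\wedge^\vartriangleright B_2, B_1\rangle$ plus $\langle A\wedge^\vartriangleright B_2, B_1\rangle=(-1)^{(k+t_2)t_1}\langle B_1, A\wedge^\vartriangleright B_2\rangle$ should, after reduction modulo $2$, reproduce the exponent $t_2(k+t_1)+kt_1+1$, giving a cheap sanity check on the computation. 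The $\mathcal{l}$-valued identities \eqref{C_1AC_2} follow verbatim upon substituting \eqref{ZXZ'} in place of \eqref{YXY'}.
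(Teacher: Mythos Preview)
Your proposal is correct and follows exactly the route the paper intends: the paper's own proof is the single sentence ``Some easy manipulations yield these identities by using \eqref{YXY'} and \eqref{ZXZ'},'' and your basis expansion together with the graded wedge reorderings is precisely those manipulations spelled out. Your sign bookkeeping and sanity check are accurate, so nothing is missing.
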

\begin{proof}
Some easy manipulations yield these identities by using \eqref{YXY'} and \eqref{ZXZ'}.
\end{proof}

The following maps for Lie algebra valued differential forms are induced by ones of a differential 2-crossed module in the section \ref{G2}.
We define a bilinear map $\overline{\sigma}: \Lambda^{t_1}(M, \mathcal{h}) \times \Lambda^{t_2}(M, \mathcal{h}) \longrightarrow \Lambda^{t_1+t_2}(M, \mathcal{g})$ by 
\begin{align}\label{123}
\langle \overline{\sigma}(B_1,B_2), A\rangle = (-1)^{kt_2+1} \langle B_1, A\wedge^\vartriangleright B_2\rangle,
\end{align}
for $A\in \Lambda^k(M, \mathcal{g}), B_1 \in \Lambda^{t_1}(M, \mathcal{h})$ and $B_2 \in \Lambda^{t_2}(M, \mathcal{h})$,
and have 
\begin{align}\label{AB_1B_2}
\langle  A,\overline{\sigma}(B_1,B_2)\rangle = (-1)^{t_1t_2+1} \langle  A\wedge^\vartriangleright B_2,B_1\rangle.
\end{align}
Similarly, define a bilinear map $\overline{\kappa}: \Lambda^{q_1}(M, \mathcal{l}) \times \Lambda^{q_2}(M, \mathcal{l}) \longrightarrow \Lambda^{q_1+q_2}(M, \mathcal{g})$ by 
\begin{align}\label{CCA}
\langle \overline{\kappa}(C_1,C_2), A\rangle = (-1)^{kq_2+1} \langle C_1, A\wedge^\vartriangleright C_2\rangle,
\end{align} 
for $C_1 \in \Lambda^{q_1}(M, \mathcal{l})$ and $C_2 \in \Lambda^{q_2}(M, \mathcal{l})$,
and have
\begin{align}\label{AC_1C_2}
\langle  A,\overline{\kappa}(C_1,C_2)\rangle = (-1)^{q_1q_2+1} \langle  A\wedge^\vartriangleright C_2,C_1\rangle.
\end{align}
Besides, we define bilinear maps $\overline{\eta_i}: \Lambda^q (M,\mathcal{l}) \times \Lambda^t(M, \mathcal{h}) \longrightarrow \Lambda^{q+t}(M,\mathcal{h})$ $( i = 1,2)$ by 
$$\overline{\eta_i}(C, B):= \sum\limits_{a,b} C^b \wedge B^a \eta_i(Z_b, Y_a), $$
for $B= \sum\limits_{a} B^a Y_a \in \Lambda^t(M, \mathcal{h})$, $C= \sum\limits_{b} C^b Z_b \in \Lambda^q(M, \mathcal{l})$.
There are two identities
\begin{align}\label{B_1B_2C}
\langle  B_1 \wedge^{\left\{\ ,\   \right\}} B_2, C \rangle = (-1)^{t_1(t_2 + q) +1}\langle B_2, \overline{\eta_1}(C, B_1) \rangle,
= (-1)^{t_2 q +1}\langle B_1, \overline{\eta_2}(C, B_2) \rangle 
\end{align}
by using \eqref{YY'Z}.
Moreover, we have
\begin{align}
\langle B, \alpha^*(A)\rangle&= \langle\alpha(B), A\rangle,\label{BAB}\\
\langle C, \beta^*(B)\rangle&= \langle\beta(C), B\rangle,\label{ASD}
\end{align}
induced by $\langle Y, \alpha^*(X)\rangle_\mathcal{h}=\langle \alpha(Y), X\rangle_\mathcal{g}$ and $\langle Z, \beta^*(Y)\rangle_\mathcal{l}=\langle \beta(Z), Y\rangle_\mathcal{h}$.
Finally, a trilinear map $\overline{\theta}: \Lambda^{t_1}(M, \mathcal{h}) \times \Lambda^{t_2}(M, \mathcal{h})  \times \Lambda^{q}(M, \mathcal{l}) \longrightarrow \Lambda^{t_1+t_2 + q}(M, \mathcal{g}) $ is defined by
\begin{align}\label{321}
\langle \overline{\theta}(B_1, B_2, C) , A \rangle = (-)^{(q+k)(t_1 + t_2) +1 } \langle C, (A \wedge^{\vartriangleright} B_1) \wedge^{\left\{ \ ,\  \right\}}B_2\rangle.
\end{align}

\section{ YM as  a BFYM theory }\label{YMBFYM}
The BF theory is a topological field theory, which  has less physical degrees of freedom than the YM theory and doesn't depend on the spacetime metric.
Given a Lie group $G$ and its corresponding Lie algebra $\mathcal{g}$, we consider a gauge field (i.e. a $\mathcal{g}$-valued differential one-form) $A$ over a $d$-dimensional manifold $M$ $(d \geqslant 2)$ and its corresponding field strength two-form $F = dA + A \wedge A$.  The BF action principle has the form 
\begin{align}\label{pBF}
S_{BF} = \int_{M} \langle \overline{B}, F \rangle,
\end{align}
where $\overline{B}$ is a $\mathcal{g}$-valued Lagrange multiplier
$(d-2)$-form. 
Taking the variations with respect to $\overline{B}$ and $A$ respectively, we can obtain the following equations of motion 
\begin{align}
F =0,\ \ \ \ \ \ \ \ \ d \overline{B} +A \wedge^{[ \ , \ ]} \overline{B}=0.
\end{align}
The first equation implies that $A$ is a flat connection,   and the second equation implies that the Lagrange multiplier $\overline{B}$ is a constant field.
Therefore, there are no local propagating degrees of freedom, which shows that the BF theory is a topological field theory.
%The BF theory is a topological field theory, which works in any dimension.
For more discussions and developments of this topic, we suggest \cite{Celada:2016jdt, Freidel:1999rr, EW, GTH, MBGT, MBGT1, DMM, VCMM, CMVV, CGRS, JCB, CRR, ASS} and references therein.
%without a metric %which has been converted into the Euclidian Formalism $(d \geq 2)$ \cite{PDFP}, 
%In this paper, we study the non-abelian BF theory and the abelian case is considered in \cite{}.
%     Therefore, the theory given by the BF action has no local propagating degrees of freedom, i.e., the theory is topological.

In this paper, we are interested in the BFYM theory \cite{Celada:2016jdt, ASC}, which is a  nontopological deformed BF theory.
The BFYM action is given by
\begin{align}\label{ddd}
S_{BFYM} = \int_{M} \langle \overline{B}, F \rangle + e^2 \langle \overline{B}, \ast \overline{B} \rangle
\end{align}
over a $d$-dimensional curved manifold $M$ $(d \geqslant 2)$  with a fixed metric tensor $g_{\mu \nu}$ of signature $(\sigma, +1, ..., +1)$. Here $\sigma =-1  (\sigma = +1)$ corresponds to the Lorentzian (Riemannian) case, and $e$ is a nonvanishing coupling constant and  $\ast$ is the internal Hodge star operator  on $M$. See appendix \ref{star}.

Varying the action \eqref{ddd} with respect to $\overline{B}$ and $A$ respectively, we obtain the equations of motion 
\begin{align}
F + 2 e^2 \ast \overline{B} = 0, \ \ \ d\overline{B} + A\wedge^{[,]} \overline{B}= 0.
\end{align}
By the first equation, we can see that $\overline{B}$ must be equal to $ - \dfrac{\sigma}{2e^2}\ast \mathcal{F}$.
Then substitute it into the second equation, and we obtain the Yang-Mills equation, i.e. $d\ast F + A\wedge^{[,]} \ast F = 0$. Besides, substituting $\overline{B}$  into \eqref{ddd}, we obtain the Yang-Mills action
\begin{align}\label{YM}
S_{YM} = - \dfrac{\sigma}{4e^2} \int_{M} \langle F, \ast F \rangle.
\end{align}
Thus, the space of solutions of the BFYM theory is corresponding to  the space of solutions of the YM theory.

For the vanishing coupling constant, i.e. $e \rightarrow 0$, the action \eqref{ddd} reduces to the topological BF action \eqref{pBF}.
For the nonvanishing $e$, the term $e^2 \langle \overline{B}, \ast \overline{B} \rangle $ in  \eqref{ddd} violates the topological symmetry and includes the degrees of freedom of the YM theory.
Consequently, the BFYM and YM theories are classically equivalent. 
We  note that this equivalence also holds at the quantum level  \cite{FUCITO1997259, MARTELLINI199762}, but we will not develop this point here.

Under a gauge transformation with $g\in G$, the gauge field $A$ transforms as
\begin{align}
A \longrightarrow A'= g^{-1}A g + g^{-1}dg,
\end{align}
and the field strength $F$ changes as
\begin{align}
F \longrightarrow F'= g^{-1}F g.
\end{align}
It is easy to prove that the YM action \eqref{YM} is gauge invariant, i.e., is invariant under the above gauge transformation. Besides, we introduce the gauge transformation acts on the Lagrange multiplier
\begin{align}
\overline{B} \longrightarrow \overline{B}'= g^{-1}\overline{B}g.
\end{align}
Then, it implies that the BF action \eqref{pBF} and BFYM action \eqref{ddd} are both gauge invariant. 
Moreover, the correspondence
between the BFYM and YM theories is preserved by this gauge equivalence. 

\section{Higher YM as higher BFYM theories} \label{HBFYM}
\subsection{2YM as a 2BFYM theory}\label{2BFYM}
In the framework of the higher gauge theory, 
the BF theory have been generalized to the 2BF theory \cite{Martins:2010ry, FGHPEM} based on a higher group. In this section, we first review the 2BF theory and the 2YM theory, then construct a deformed 2BF action to define the dynamics of the 2YM theory.
We will refer to this type of action as a 2BFYM action.
\subsubsection{2BF theory}
Given a Lie crossed module $(H,G; \alpha,\vartriangleright)$, an associated differential crossed module $(\mathcal{h}, \mathcal{g}; \alpha, \vartriangleright)$ can be constructed.
On a $d$-dimensional manifold $M$ $(d \geqslant 3)$,
a 2-gauge field (called a 2-connection) $(A, B)$ consists of a $\mathcal{g}$-valued one-form $A$ and an $\mathcal{h}$-valued two-form $B$. The fake curvature two-form and a 2-curvature three-form can be given by
\begin{align}\label{cur}
\mathcal{F} = dA + A \wedge A - \alpha(B),\ \ \mathcal{G} = dB + A \wedge^{\vartriangleright} B.
\end{align}
A 2-gauge field $(A, B)$ will be called fake-flat, if the fake curvature vanishes, i.e. $\mathcal{F}=0$. Similarly, $(A, B)$ will be called 2-flat, if the 2-curvature vanishes, i.e. $\mathcal{G}=0$. Besides, $(A, B)$ will be called flat, if $(A, B)$ is not only fake flat, but also 2-flat, i.e. $\mathcal{F}=\mathcal{G}=0$.

The corresponding 2BF action can be constructed by  using the two curvatures \eqref{cur} as
\begin{align}\label{2BF}
S_{2BF} = \int_{M} \langle \overline{B}, \mathcal{F} \rangle + \langle \overline{C}, \mathcal{G} \rangle,
\end{align}
where $\overline{B} \in \Lambda^{d - 2} (M, \mathcal{g})$ and  $\overline{C} \in \Lambda^{d-3} (M, \mathcal{h})$ are two Lagrange multipliers.

Take the variational derivative of $S_{2BF}$ with respect to  $\overline{B}$ and $\overline{C}$, then we can get the equations of motion
\begin{align}\label{BFEM}
\delta \overline{B} :  \mathcal{F} =0,\ \ \ \ 
\delta \overline{C} :  \mathcal{G} = 0.
\end{align}
Equations \eqref{BFEM} imply that the 2-gauge field $(A, B)$ is flat, i.e. fake flat and 2-flat. The vanishing of the fake curvature implies that the solution of the 2BF theory is a 2-connection $(A, B)$ on a trivial principal 2-bundle with a strict structure 2-group over $M$, which is equivalent to a 2-holonomy \cite{JBUS}.

Similarly, taking the variational derivative of $S_{2BF}$ with respect to $A$ and $B$, we have
\begin{align}
&	\delta A :  d \overline{B} +A \wedge^{[\ ,\ ]} \overline{B} = (-1)^{d+1} \overline{\sigma} (\overline{C}, B),\\
&	\delta B : d \overline{C} +A \wedge^{\vartriangleright} \overline{C} = (-1)^d \alpha^* (\overline{B}).
\end{align}
The two equations have only trivial solutions, or one can use the Hamiltonian analysis to show that there are no local propagating degrees of freedom \cite{AMMA, AMMA1}. In other wards, it shows the topological nature of the 2BF theory. 

The manipulation of the variation is quite similar to that  given in section \ref{2bfym}, thus we will not give this proof here.

\subsubsection{2YM theory}\label{2YMM}
In 2002, Baez has generalized YM theory to a kind of 2YM theory on a 4-dimensional manifold $M$ within the framework of the 2-gauge theory \cite{2002hep.th....6130B}. In this section, we extend these results on a $d$-dimensional manifolds $M$ $(d \geq 3)$ with a metric tensor $g_{\mu\nu}$ of signature $(\sigma, +1, \cdots, +1)$.
For a 2-gauge fields $(A, B)$ and the corresponding curvature form $(\mathcal{F}, \mathcal{G})$, 
we copy the usual formula for the YM action \eqref{YM} and write down the 2YM action
\begin{align}\label{2ym}
S_{2YM} = - \dfrac{\sigma}{4e^2} \int_{M}  (\langle \mathcal{F} , \ast \mathcal{F} \rangle +\langle \mathcal{G} , \ast \mathcal{G} \rangle).
\end{align}

Take the variational derivative of $S_{2YM}$ with respect to the 2-gauge field $(A, B)$ and the Lagrange multipliers $\overline{B}$ and $\overline{C}$, then obtain 
\begin{align}
\delta S_{2YM} = - \dfrac{\sigma}{2e^2} \int_{M} (\langle \delta \mathcal{F} , \ast \mathcal{F} \rangle +\langle \delta\mathcal{G}, \ast \mathcal{G} \rangle ).
\end{align}
For the first term, we have
\begin{align*}
\langle \delta \mathcal{F},\ast \mathcal{F}\rangle &= \langle \delta(dA + A \wedge A - \alpha(B)), \ast \mathcal{F} \rangle\\
%        &= \langle d \delta A, \ast \mathcal{F}\rangle + \langle A \wedge^{[,]} \delta A, \ast \mathcal{F} \rangle - \langle \alpha(\delta B), \ast \mathcal{F} \rangle \\
%        &= \langle \delta A, d \ast \mathcal{F} \rangle + \langle \delta A, A \wedge^{[,]} \ast \mathcal{F} \rangle - \langle \delta B ,\alpha^*(\ast \mathcal{F}) \rangle \\
&= \langle \delta A, d\ast \mathcal{F}+ A\wedge^{[,]}\ast \mathcal{F} 	\rangle-\langle \delta B, \alpha^*(\ast \mathcal{F})\rangle,
\end{align*}
by using \eqref{A_1A_2A_3} and \eqref{BAB}.
Similarly, for the second term, we have   
\begin{align*}
\langle \delta\mathcal{G}, \ast \mathcal{G}\rangle &= \langle \delta(dB + A \wedge^{\vartriangleright} B), \ast \mathcal{G} \rangle\\
%			       &= \langle d \delta B, \ast \mathcal{G} \rangle + \langle \delta A \wedge ^{\vartriangleright} B, \ast \mathcal{G} \rangle+ \langle A \wedge^\vartriangleright \delta B, \ast \mathcal{G} \rangle\\
%			       &= - \langle \delta B, d \ast \mathcal{G} \rangle - \langle \delta A, \overline{\sigma}(\ast \mathcal{G}, B) \rangle - \langle \delta B, A \wedge^\vartriangleright \ast \mathcal{G} \rangle\\
&= - \langle \delta A, \overline{\sigma}(\ast \mathcal{G}, B) \rangle -\langle \delta B, d\ast \mathcal{G} + A\wedge^\vartriangleright \ast \mathcal{G} \rangle
\end{align*}
by using \eqref{B_1AB_2} and \eqref{AB_1B_2}.
%		Thus one obtain
%		\begin{align*}
%				\delta S_{2YM} = &( - \dfrac{\sigma}{2e^2})\int_{M}  \langle \delta A, d\ast \mathcal{F}+ A\wedge^{[,]}\ast \mathcal{F} - \overline{\sigma}(\ast \mathcal{G}, B) \rangle-\langle \delta B, d\ast \mathcal{G} + A\wedge^\vartriangleright \ast \mathcal{G} + \alpha^*(\ast \mathcal{F})\rangle.
%			\end{align*}

Thus, we see that the variation of the action vanishes if and only if the following 2YM field equations hold
\begin{align}\label{2ymem}
d\ast \mathcal{F} +A\wedge^{[\ ,\ ]} \ast \mathcal{F}=\overline{\sigma}(\ast \mathcal{G}, B),\ \ \ 
d\ast \mathcal{G} + A\wedge^\vartriangleright \ast \mathcal{G}=-\alpha^*(\ast \mathcal{F}).
\end{align}
In the above calculation, we assume the variations $\delta A$ and $\delta B$ are compactly supported. Thus, some boundary terms are ignored in this case.

Consider a Euclidean 2-group $(\mathbb{R}^n, SO(n); \alpha, \vartriangleright)$ or Poincar$\acute{e}$ 2-group $(\mathbb{R}^n, SO(1, n-1); \alpha, \vartriangleright)$, where $\vartriangleright$ is the representation of $G$ on $H$, and $\alpha$ is a trivial map. We have the associated differential crossed module $(\mathbb{R}^n,so(n); \alpha, \vartriangleright)$ or $(\mathbb{R}^n, so(1, n); \alpha, \vartriangleright)$ with a null map $\alpha$.
This implies in particular that the fake curvature $\mathcal{F}= dA + A\wedge A$. If the 2-gauge field $(A, B)$ is fake flat, i.e. $\mathcal{F}=0$, the flatness of $A$ implies $A=0$. Thus, the 2-curvature three-form $\mathcal{G}= dB$. Therefore the 2YM action \eqref{2ym} agrees with that of Abelian 2-form electrodynamics \cite{HP}.
A similar result can be shown for all tangent 2-groups, in which $H=\mathcal{g}$ is a Lie algebra of any Lie group $G$, with the adjoint representation
$\vartriangleright$ and the trivial map $\alpha$.

\subsubsection{2BFYM theory}\label{2bfym}
Based on the above 2-categorical generalization of BF and YM theories,  a similar argument will be applied to the BFYM theory.  This is the first major theme of the present paper.

We copy the usual formal for the BFYM action \eqref{ddd} and construct a new action as
\begin{align}\label{ase}
S_{2BFYM} = \int_{M} \langle \overline{B}, \mathcal{F} \rangle +  e^2 \langle \overline{B}, \ast \overline{B} \rangle + \langle \overline{C}, \mathcal{G} \rangle + e^2 \langle \overline{C}, \ast \overline{C} \rangle.
\end{align}
We will refer to this action as the 2BFYM action. 

Take the variational derivative of  $S_{2BFYM} $ with respect to $A$ and $B$, then obtain
\begin{align}\label{0}
\delta S_{2BFYM} = \int_{M} \delta \langle \overline{B}, \mathcal{F} \rangle + 2e^2 \langle \delta \overline{B}, \ast \overline{B} \rangle + \delta \langle \overline{C}, \mathcal{G} \rangle + 2e^2\langle \delta \overline{C}, \ast \overline{C}\rangle.
\end{align}
For the first term, we have
\begin{align}\label{1}
\delta \langle \overline{B}, \mathcal{F} \rangle= \langle \delta \overline{B}, \mathcal{F} \rangle + \langle \delta A, d \overline{B} + A \wedge ^{[,]} \overline{B} \rangle - \langle \delta B, \alpha^{*}(\overline{B}) \rangle,
\end{align}
by using \eqref{A_1A_2A_3} and \eqref{BAB}.
And for the third term, we have
\begin{align}\label{2}
\delta \langle \overline{C}, \mathcal{G} \rangle= \langle \delta \overline{C}, \mathcal{G} \rangle + (-1)^d \langle \delta B, d \overline{C} + A \wedge^{\vartriangleright}\overline{C}\rangle +(-1)^d \langle \delta A, \overline{\sigma}(\overline{C}, B)\rangle,
\end{align}
by using \eqref{B_1AB_2} and \eqref{AB_1B_2}. 
Substituting \eqref{1} and \eqref{2} into \eqref{0}, obtain
\begin{align}
\delta S_{2BFYM} &= \int_{M} \langle \delta \overline{B}, \mathcal{F} + 2e^2 \ast \overline{B} \rangle + \langle \delta A, d \overline{B} + A \wedge^{[,]} \overline{B} + (-1)^d \overline{\sigma}(\overline{C}, B) \rangle \nonumber \\
& \ \ \ + \langle \delta \overline{C}, \mathcal{G} + 2e^2 \ast \overline{C} \rangle + (-1)^d \langle \delta B, d \overline{C} + A \wedge^{\vartriangleright} \overline{C} - (-1)^d \alpha^*(\overline{B}) \rangle.
\end{align}
It is easy to see that the equations of motion of the 2BFYM theory are
\begin{align}
&\delta \overline{B} : \ \ \mathcal{F} + 2 e^2 \ast \overline{B} = 0,\label{dd1}\\
&	\delta \overline{C} : \ \ \mathcal{G} +  2 e^2 \ast \overline{C} = 0,\label{dd2}\\
&	\delta A : \ \ d\overline{B} + A\wedge^{[\ ,\ ]} \overline{B}= (-1)^{d+1} \overline{\sigma} (\overline{C}, B),\label{d1}\\
&	\delta B : \ \ d\overline{C} + A\wedge^{\vartriangleright} \overline{C} = (-1)^d \alpha^* (\overline{B}).\label{d2}
\end{align}
From the first two equations, we get $\overline{B}= - \dfrac{\sigma}{2e^2}\ast \mathcal{F}$, and $\overline{C}= (-1)^d \dfrac{\sigma}{2e^2}\ast \mathcal{G}$ by using \eqref{ast}. Substituting them into the last two equations, we can obtain the 2-form Yang-Mills equations \eqref{2ymem}.
On the other side, substitute the solution $\overline{B}$
and $\overline{C}$ back into \eqref{ase}, then we obtain the 2YM action \eqref{2ym}.
Thus, the space of solutions of the 2BFYM theory is corresponding to that of the 2YM theory. In the limit of vanishing coupling, i.e. $e \rightarrow 0$, the action \eqref{ase} reduces to the topological 2BF action \eqref{2BF}.
Consequently, the 2BFYM theory is classically equivalent to the 2YM theory. 

\emph{The gauge symmetry of 2BFYM theory.}
We will be only interested in the local aspects of 2-connections, and  consider the following two types of
gauge transformations \cite{Martins:2010ry}.
%of the 2-connection $(A, B)$. For details see \cite{Martins:2010ry}.
\begin{itemize}
\item [1)] Thin: $A' = g^{-1} A g + g^{-1} dg$,\ \ \ $B' = g^{-1} \vartriangleright B$,\ \ \ $\overline{B}' = g^{-1} \overline{B} g$,\ \ \ $\overline{C}'= g^{-1} \vartriangleright \overline{C}$;
\item [2)] Fat: $A' = A + \alpha(\phi)$,\ \ \ $B' = B + d \phi + A \wedge^{\vartriangleright} \phi + \phi \wedge \phi,$\ \ \ $\overline{B}' = \overline{B} + \overline{\sigma}(\overline{C}, \phi)$, \ \ \ $\overline{C}' = \overline{C},$
\end{itemize}
where $g \in G$ and $\phi \in \Lambda^1(M, \mathcal{h})$.

Under the thin gauge transformation, the corresponding curvatures change as
\begin{align}
\mathcal{F}' = g^{-1} \mathcal{F} g,\ \ \ \mathcal{G}' = g^{-1} \vartriangleright \mathcal{G}.
\end{align}
Due to the $G$-invariance  of the bilinear forms, have
\begin{align}
\langle g^{-1} \overline{B} g, g^{-1} \mathcal{F} g \rangle = \langle \overline{B}, \mathcal{F} \rangle,\ \ \ \langle g^{-1}\vartriangleright \overline{C}, g^{-1} \vartriangleright \mathcal{G} \rangle = \langle \overline{C}, \mathcal{G} \rangle,
\end{align}
then the 2BF action \eqref{2BF} is invariant, i.e. $S_{2BF}'= S_{2BF}$. 
Similarly, have
\begin{align}
\langle g^{-1} \overline{B} g, g^{-1} \ast \overline{B} g \rangle = \langle \overline{B}, \ast \overline{B} \rangle,\ \ \ \langle g^{-1}\vartriangleright \overline{C}, g^{-1} \vartriangleright \ast \overline{C} \rangle = \langle \overline{C}, \ast \overline{C} \rangle,
\end{align}
then we have $S_{2BFYM}' = S_{2BFYM}$.
Namely, the 2BFYM action is thin gauge invariant.% under the thin gauge transformation. %We call that the 2BFYM theory has the thin gauge symmetry.

Under the fat gauge transformation, the curvatures change as 
\begin{align}
%\item  $F' =  F + d(\alpha (\phi)) + A \wedge \alpha(\phi) + \alpha(\phi) \wedge A + \alpha(\phi) \wedge \alpha(\phi),$
\mathcal{F} ' =  \mathcal{F},\ \ \ 
\mathcal{G} ' =  \mathcal{G} + \mathcal{F} \wedge^{\vartriangleright} \phi.
\end{align}
Due to
$$\langle \overline{\sigma}(\overline{C}, \phi), \mathcal{F} \rangle = - \langle \overline{C}, \mathcal{F} \wedge^{\vartriangleright} \phi \rangle, $$
by using \eqref{123},  the 2BF action \eqref{2BF} is fat invariant.
Besides for 2BFYM action, have
\begin{align}\label{fat2BF}
S_{2BFYM}' = S_{2BFYM} + e^2 (\langle \overline{\sigma}(\overline{C} , \phi),  \ast \overline{\sigma}(\overline{C} , \phi) + 2 \ast \overline{B} \rangle,
\end{align}
then it implies that the 2BFYM action  is not fat gauge invariant. 
However, it follows from \eqref{fat2BF} that the 2BFYM action  is fat gauge invariant with the constrained condition
$$\overline{\sigma}(\overline{C} , \phi)=0, \ \ \ or \ \ \ 	\overline{\sigma}(\overline{C} , \phi) =- 2 \overline{B}.$$

\subsection{3YM as a 3BFYM theory}\label{3BFYM}
In this section, we first introduce 3BF \cite{TRMV, Radenkovic:2019qme} and 3YM \cite{sdh} theories based the categorical generalization. Then we  construct a deformed 3BF action to define the dynamics of the 3YM theory. Be similar to 2BFYM action, we will refer to this type of action as a 3BFYM action.
\subsubsection{3BF theory} 
Given a 2-crossed module $(L, H, G; \beta, \alpha, \vartriangleright, \{ \ , \ \})$, an associated differential 2-crossed module $(\mathcal{l}, \mathcal{h}, \mathcal{g}; \beta, \alpha, \vartriangleright, \{ \ , \ \})$ can be constructed. We consider a 3-gauge field (called a 3-connection) $(A, B, C)$ on a $d$-dimensional manifold $M$ $(d \geqslant 4)$, consisting of 
$$A \in \Lambda^1(M, \mathcal{g}),\ \ \ \ B \in \Lambda^2(M, \mathcal{h}),\ \ \ \ C \in \Lambda^3(M, \mathcal{l}).$$ 
The related fake curvature 2-form, fake 2-curvature 3-form and 3-curvature 4-form are respectively, given by:
\begin{align}
\mathcal{F} = dA + A \wedge A &- \alpha(B),\ \ \ \ \ \ 
\mathcal{G} = dB + A\wedge^{\vartriangleright} B - \beta(C),\nonumber\\
\mathcal{H} =& dC + A \wedge^\vartriangleright C + B \wedge^{\left\{ \ ,\  \right\}} B.
\end{align}
A 3-gauge field $(A, B, C)$ will be called fake 1-flat, if the fake curvature vanishes, i.e. $\mathcal{F}=0$, and fake flat, if it is fake 1-flat and the fake 2-curvature vanishes, i.e. $\mathcal{G}=0$. Furthermore, $(A, B, C)$ will be called flat, if it is fake flat and the 3-curvature vanishes, i.e. $\mathcal{H}=0$.

Continuing the categorical generalization one step further, the 2BF action has been generalized to the 3BF action \cite{ TRMV11, RV}.
The corresponding 3BF action can be written as
\begin{align}\label{3BF}
S_{3BF} = \int_{M} \langle \overline{B}, \mathcal{F} \rangle + \langle \overline{C}, \mathcal{G} \rangle + \langle \overline{D}, \mathcal{H} \rangle,
\end{align}
where $\overline{B} \in \Lambda^{d-2} (M, \mathcal{g})$, $\overline{C} \in \Lambda^{d-3} (M, \mathcal{h})$ and $\overline{D} \in \Lambda^{d-4} (M, \mathcal{l})$ are Lagrange multipliers. 

Varying the action $S_{3BF}$ with respect to $\overline{B}, \overline{C}$ and $\overline{D}$, we obtain the equations of motion,
\begin{align}\label{eq3}
\delta \overline{B} :\mathcal{F} =0,\ \ \
\delta \overline{C}:\mathcal{G} =0,\ \ \
\delta \overline{D}:\mathcal{H} = 0.
\end{align}
Equations \eqref{eq3} imply that the 3-gauge field $(A, B, C)$ is flat. The vanishing of these fake curvatures implies that the solution $(A, B, C)$ of 3BF theory is a 3-connection on a trivial principal 3-bundle with a strict structure 3-group over $M$, which is equivalent to a 3-holonomy \cite{Saemann:2013pca}.

Similarly, varying with respect to
the 3-gauge field $(A, B, C)$,  the equations of motion are given by
\begin{align*}
&\delta A:\ \ \ \ d \overline{B} +A \wedge^{[\ ,\ ]} \overline{B} = (-1)^{d+1} \overline{\sigma} (\overline{C}, B) + (-1)^d \overline{\kappa} (\overline{D}, C),\\
&\delta B:\ \ \ \ d \overline{C} +A \wedge^{\vartriangleright} \overline{C} = (-1)^d \alpha^* (\overline{B}) +(-1)^{d} (\overline{\eta_2}(\overline{D}, B) + \overline{\eta_1}(\overline{D}, B)),\\
&\delta C:\ \ \ \ d \overline{D} + A\wedge^\vartriangleright \overline{D} = (-1)^{d+1} \beta^* (\overline{C}).
\end{align*}
And the three equations have only trivial solutions showing that there are no local propagating degrees of freedom. That is to say the 3BF theory is topological.
A detail manipulation of the variation will be given in appendix \ref{em3BF}. 

\subsubsection{3YM theory}
Without loss of generality, we apply a similar generalization to the 2YM theory. 
We consider a 3-gauge field $(A, B, C)$ on a $d$-dimensional manifold $M$ $(d \geq 4)$ with a metric tensor $g_{\mu\nu}$ of signature $(\sigma, +1, \cdots, +1)$. 
Then, be similar to the construction of Yang-Mills and 2-form Yang-Mills actions, the 3-form Yang-Mills action \cite{sdh} can be defined by the corresponding curvature form $(\mathcal{F}, \mathcal{G}, \mathcal{H})$ as
\begin{align}\label{3ymac}
S_{3YM}= - \dfrac{\sigma}{4e^2} \int_{M}( \langle \mathcal{F} , \ast \mathcal{F} \rangle +\langle \mathcal{G} , \ast \mathcal{G} \rangle+ \langle \mathcal{H} , \ast \mathcal{H} \rangle).
\end{align}

Varying the action with respect to $A$, $B$ and $C$, we obtain 
\begin{align*}
\delta S_{3YM} = &- \dfrac{\sigma}{2e^2} \int_{M}  (\langle \delta A, d\ast \mathcal{F}+ A\wedge^{[,]}\ast \mathcal{F} + (-1)^{d-1} \overline{\kappa}(\ast\mathcal{H}, C)-\overline{\sigma}(\ast \mathcal{G}, B)\rangle
\\[2mm]
&\ \ - \langle \delta B,d \ast \mathcal{G} + A\wedge^\vartriangleright \ast \mathcal{G}+ \overline{\eta_2}(\ast \mathcal{H}, B) + \overline{\eta_1}(\ast \mathcal{H}, B)+\alpha^*(\ast \mathcal{F})\rangle\\[2mm]
&\ \ + \langle \delta C, d\ast \mathcal{H}+A\wedge^\vartriangleright \ast\mathcal{H} - \beta^*(\ast \mathcal{G})\rangle).
\end{align*}
A detail manipulation of the variation will be given in appendix \ref{v3bf}.
We see that the variation of the action vanishes for $\delta A $, $\delta B$ and $\delta C$ if and only if the following field equations hold
\begin{align}\label{3ymeq}
&d\ast \mathcal{F}+ A\wedge^{[\ ,\ ]}\ast \mathcal{F} =\overline{\sigma}(\ast \mathcal{G}, B)+ (-1)^{d} \overline{\kappa}(\ast\mathcal{H}, C),\nonumber\\
&d \ast \mathcal{G} + A\wedge^\vartriangleright \ast \mathcal{G}=-\overline{\eta_2}(\ast \mathcal{H}, B) - \overline{\eta_1}(\ast \mathcal{H}, B)-\alpha^*(\ast \mathcal{F}),\nonumber\\
&d\ast \mathcal{H}+A\wedge^\vartriangleright \ast\mathcal{H} = \beta^*(\ast \mathcal{G}).
\end{align}

In the above calculation, we assume the variations $\delta A$, $\delta B$ and $\delta C$ are compactly supported. Thus, some boundary terms are ignored in this case.

\subsubsection{3BFYM theory}
In this subsection, we turn to the 3-categorical generalization of BFYM theory. This is the second major theme of the present paper. The main problem is to push the 2BFYM action further and set up a sort of 3BFYM action, which is precisely equivalent to the 3-form Yang-Mills theory.

Similarly, we copy the usual formal for the 2BFYM action \eqref{ase} and construct a new action as
\begin{align}\label{bse}
S_{3BFYM} =  \displaystyle{\int_{M} \langle \overline{B}, \mathcal{F} \rangle +  e^2 \langle \overline{B}, \ast \overline{B} \rangle + \langle \overline{C}, \mathcal{G} \rangle + e^2 \langle \overline{C}, \ast \overline{C} \rangle+ \langle \overline{D}, \mathcal{H} \rangle + e^2 \langle \overline{D}, \ast \overline{D} \rangle }.
\end{align}
We will refer to this action as the 3BFYM action.

Vary the action with respect to the 3-gauge field and the Lagrange multipliers 
\begin{align}
&\delta S_{3BFYM} \nonumber\\
= &\int_{M} \delta \langle \overline{B}, \mathcal{F} \rangle +  e^2 \delta \langle \overline{B}, \ast \overline{B} \rangle + \delta \langle \overline{C}, \mathcal{G} \rangle +  e^2 \delta \langle \overline{C}, \ast \overline{C} \rangle+ \delta \langle \overline{D}, \mathcal{H} \rangle + e^2 \delta \langle \overline{D}, \ast \overline{D} \rangle \nonumber\\
=& \int_{M} \delta \langle \overline{B}, \mathcal{F} \rangle +  2e^2  \langle \delta \overline{B}, \ast \overline{B} \rangle + \delta \langle \overline{C}, \mathcal{G} \rangle +  2e^2  \langle \delta \overline{C}, \ast \overline{C} \rangle+ \delta \langle \overline{D}, \mathcal{H} \rangle + 2e^2  \langle \delta \overline{D}, \ast \overline{D} \rangle \nonumber\\
= &\int_{M} \langle \delta \overline{B}, \mathcal{F} + 2 e^2 \ast \overline{B} \rangle + \langle \delta A, d \overline{B} + A \wedge^{[\ ,\ ]} \overline{B} + (-1)^d \overline{\sigma}(\overline{C}, B) + (-1)^{d+1} \overline{\kappa}(\overline{D}, C) \rangle \nonumber\\
&+\langle \delta \overline{C}, \mathcal{G} +2e^2 \ast \overline{C} \rangle + (-1)^d \langle \delta B, d \overline{C} + A \wedge^{\vartriangleright}\overline{C} - (-1)^d \alpha^*(\overline{B})- (-1)^{d} (\overline{\eta_2}(\overline{D}, B)  \nonumber\\
& + \overline{\eta_1}(\overline{D}, B))\rangle+ \langle \delta \overline{D}, \mathcal{H} + 2e^2 \ast \overline{D} \rangle + \langle \delta C, d \overline{D} + A\wedge^{\vartriangleright}\overline{D} + (-1)^d\beta^*(\overline{C})
\end{align}
by using \eqref{46}, \eqref{47} and \eqref{48}.

Then, we obtain the following equations of motion
\begin{align}
&\delta \overline{B} : \ \ \mathcal{F} + 2 e^2 \ast \overline{B} = 0,\ \ \ 
\delta \overline{C} : \ \ \mathcal{G} + 2 e^2 \ast \overline{C} = 0,\ \ \ 
\delta \overline{D} : \ \ \mathcal{H} + 2 e^2 \ast \overline{D} = 0,\nonumber\\
&	\delta A : \ \ d \overline{B} +A \wedge^{[\ ,\ ]} \overline{B} = (-1)^{d+1} \overline{\sigma} (\overline{C}, B) + (-1)^d \overline{\kappa} (\overline{D}, C),\nonumber\\
&	\delta B : \ \ d \overline{C} +A \wedge^{\vartriangleright} \overline{C} = (-1)^d \alpha^* (\overline{B}) +(-1)^{d} (\overline{\eta_2}(\overline{D}, B) + \overline{\eta_1}(\overline{D}, B)),\nonumber\\
&	\delta C : \ \ d \overline{D} + A\wedge^\vartriangleright \overline{D} = (-1)^{d+1} \beta^* (\overline{C}).
\end{align}
From the first three equations, we get $\overline{B}= - \dfrac{\sigma}{2e^2}\ast \mathcal{F}$, $\overline{C}= (-1)^d \dfrac{\sigma}{2e^2}\ast \mathcal{G}$ and $\overline{D}= - \dfrac{\sigma}{2e^2}\ast \mathcal{H}$ by using \eqref{ast}.
Substitute them into the last three equations, then we can obtain the 3YM equations of motion \eqref{3ymeq}.
Besides, substitute the solution $\overline{B}$, $\overline{C}$ and $\overline{D}$ back into \eqref{bse}, then we obtain the 3YM action \eqref{3ymac}.
Thus, the space of solution of the 3BFYM theory is corresponding to that of the 3YM theory.
And in the limit of vanishing coupling, $e \rightarrow 0$,  the action \eqref{bse} will reduce to the topological 3BF action.    Consequently, the 3BFYM theory is  classically equivalent to the 3YM theory.

\emph{The gauge symmetry of 3BFYM theory.}
We consider the gauge symmetries of 3BFYM theory under the following three types of gauge transformations appearing in \cite{TRMV}.  
\begin{itemize}
\item [1)] $G$-gauge transformation for $g \in G$:
\begin{align}
A' = g^{-1} A g + g^{-1} dg, \ \ \ B' = g^{-1} \vartriangleright B, \ \ \ C' = g^{-1} \vartriangleright C,\nonumber\\
\overline{B}' = g^{-1} \overline{B} g,\ \ \	\overline{C}' = g^{-1} \vartriangleright \overline{C}, \ \ \	\overline{D}' = g^{-1} \vartriangleright \overline{D}.
\end{align}
Under this gauge transformation, the curvatures change as
\begin{align}
\mathcal{F}' = g^{-1} \mathcal{F} g,\ \  \mathcal{G}' = g^{-1} \vartriangleright \mathcal{G},\ \ \mathcal{H}' = g^{-1} \vartriangleright \mathcal{H}.
\end{align}
Using the $G$-invariance of the bilinear forms, have
\begin{align}
\langle g^{-1} \overline{B} g, g^{-1} \mathcal{F} g \rangle &= \langle \overline{B}, \mathcal{F} \rangle,\  \langle g^{-1}\vartriangleright \overline{C}, g^{-1} \vartriangleright \mathcal{G} \rangle = \langle \overline{C}, \mathcal{G} \rangle,\nonumber\\
\langle g^{-1} &\vartriangleright  \overline{D}, g^{-1} \vartriangleright \mathcal{H} \rangle = \langle \overline{D}, \mathcal{H} \rangle,
\end{align}
then the 3BF action is $G$-gauge invariant, i.e., $S_{3BF}' = S_{3BF}.$ Besides, have
\begin{align}
\langle g^{-1} \overline{B} g, g^{-1} \ast \overline{B} g \rangle& = \langle \overline{B}, \ast \overline{B} \rangle,\ 
\langle g^{-1}\vartriangleright \overline{C}, g^{-1} \vartriangleright \ast \overline{C} \rangle = \langle \overline{C}, \ast \overline{C} \rangle,\nonumber\\
&	\langle g^{-1}\vartriangleright \overline{D}, g^{-1} \vartriangleright \ast \overline{D} \rangle = \langle \overline{D}, \ast \overline{D} \rangle,
\end{align}
then, we have $S_{3BFYM}' =S_{3BFYM}$. Thus, the 3BFYM action is $G$-gauge invariant under the $G$-gauge transformation.
\item [2)] $H$-gauge transformation for $\phi \in \Lambda^1(M, \mathcal{h})$:
\begin{align}
&	A' = A + \alpha(\phi), \ \ \ B' = B + d \phi + A'\wedge^{\vartriangleright}\phi - \phi \wedge \phi,\nonumber\\ \ \ \ &C' = C - B' \wedge^{\left\{\ ,\   \right\}} \phi - \phi \wedge^{\left\{\ ,\   \right\}}  B,\nonumber\\
&	 \overline{B}' = \overline{B} + \overline{\sigma}(\overline{C}', \phi) - \overline{\theta}(\phi, \phi, \overline{D}),\ \ \	\overline{C}' = \overline{C} + \eta_1(\overline{D}, \phi) + \eta_2(\overline{D}, \phi), \nonumber\\ \ \ \	& \overline{D}' = \overline{D}.
\end{align}
Under this gauge transformation, the curvatures change as
\begin{align}
\mathcal{F}' =  \mathcal{F},\ \ \mathcal{G}' = \mathcal{G} + \mathcal{F} \wedge^{\vartriangleright}\phi, \ \  \mathcal{H}' = \mathcal{H} - \mathcal{G}' \wedge^{\left\{ \ ,\  \right\}}\phi + \phi\wedge^{\left\{ \ ,\  \right\}}\mathcal{G}.
\end{align}
Then we have 
\begin{align}\label{BFt}
\langle \overline{B}', \mathcal{F}' \rangle &=
\langle  \overline{B} + \overline{\sigma}(\overline{C}', \phi) - \overline{\theta}(\phi, \phi, \overline{D}), \mathcal{F}\rangle \nonumber\\
&= \langle \overline{B}, \mathcal{F} \rangle - \langle \overline{C}', \mathcal{F} \wedge^{\vartriangleright}\phi \rangle + \langle \overline{D}, (\mathcal{F} \wedge^{\vartriangleright}\phi) \wedge^{\{ \ , \ \}}\phi \rangle
\end{align}
by using \eqref{123} and \eqref{321}, and
\begin{align}\label{CGt}
\langle \overline{C}', \mathcal{G}'\rangle&=
\langle  \overline{C} + \eta_1(\overline{D}, \phi) + \eta_2(\overline{D}, \phi), \mathcal{G} + \mathcal{F} \wedge^{\vartriangleright}\phi \rangle \nonumber\\
&= \langle \overline{C}', \mathcal{F}\wedge^{\vartriangleright}\phi \rangle + \langle \overline{C}, \mathcal{G} \rangle -  \langle \overline{D}, \phi \wedge^{\{ \ , \ \}}\mathcal{G} \rangle + \langle \overline{D}, \mathcal{G} \wedge^{\{ \ , \ \}} \phi \rangle
\end{align}
by using \eqref{B_1B_2C}, and
\begin{align}\label{DHt}
\langle \overline{D}', \mathcal{H}'\rangle &=\langle \overline{D}, \mathcal{H} - \mathcal{G}' \wedge^{\left\{ \ ,\  \right\}}\phi + \phi\wedge^{\left\{ \ ,\  \right\}}\mathcal{G} \rangle\nonumber\\
&=\langle \overline{D}, \mathcal{H} - (\mathcal{G} + \mathcal{F} \wedge^{\vartriangleright}\phi) \wedge^{\left\{ \ ,\  \right\}}\phi + \phi\wedge^{\left\{ \ ,\  \right\}}\mathcal{G} \rangle.
\end{align}
Combining \eqref{BFt}, \eqref{CGt} with \eqref{DHt}, we get 
\begin{align}
S_{3BF}' = S_{3BF},
\end{align}
i.e. the 3BF action is $H$-gauge invariant.
Since
\begin{align*}
\langle \overline{B}', \ast \overline{B}' \rangle &
= \langle \overline{B}, \ast \overline{B} \rangle + e^2 \langle \overline{\sigma}(\overline{C}', \phi)- \overline{\theta}(\phi, \phi, \overline{D}), \ast 2 \overline{B} + \ast \overline{\sigma}(\overline{C}', \phi)- \ast \overline{\theta}(\phi, \phi, \overline{D})\rangle,\\
\langle \overline{C}', \ast \overline{C}' \rangle &= \langle \overline{C}, \ast \overline{C} \rangle + e^2\langle \eta_1(\overline{D}, \phi)+ \eta_2(\overline{D}, \phi), \ast2 \overline{C} + \ast \eta_1(\overline{D}, \phi)+ \ast \eta_2(\overline{D}, \phi),
\end{align*}
we have 
$$S_{3BFYM}' =S_{3BFYM},$$
i.e. the 3BFYM action  is $H$-gauge invariant with the following constrained conditions:
\begin{equation}
\begin{cases}
\overline{\sigma}(\overline{C}', \phi)= \overline{\theta}(\phi, \phi, \overline{D})\\
or\\
2\overline{B} = \overline{\theta}(\phi, \phi, \overline{D})- \overline{\sigma}(\overline{C}', \phi)
\end{cases}
and \ \ \ \ \
\begin{cases}
\eta_1(\overline{D}, \phi)+ \eta_2(\overline{D}, \phi)=0\\
or\\
2\overline{C} = - \eta_1(\overline{D}, \phi)- \eta_2(\overline{D}, \phi).
\end{cases}
\end{equation}
\item [3)] $L$-gauge transformation: 
$$A' = A, \ \ \ B' = B + \beta(\psi), \ \ \ C' = C + d \psi + A \wedge^{\vartriangleright} \psi,$$
$$ \overline{B}' = \overline{B} - \overline{\kappa}(\overline{D}, \psi),\ \ \	\overline{C}' = \overline{C}, \ \ \	\overline{D}' =  \overline{D}.$$

Under the $L$-gauge transformation, the curvatures change as
\begin{align}
\mathcal{F}' =  \mathcal{F},\ \ 
\mathcal{G}' = \mathcal{G},\ \ 
\mathcal{H}' = \mathcal{H} - \mathcal{F}\wedge^{\vartriangleright}\psi.
\end{align}
The 3BF action is gauge invariant as
$$S_{3BF}' = S_{3BF},$$
by using \eqref{CCA}.
Due to
\begin{align}
\langle \overline{B}', \ast \overline{B}' \rangle&= \langle \overline{B}, \ast \overline{B} \rangle -\langle \overline{\kappa}(\overline{D},\psi),\ast 2 \overline{B} - \ast \overline{\kappa}(\overline{D},\psi)\rangle,\nonumber\\
\langle \overline{C}', \ast \overline{C}' \rangle &= \langle \overline{C}, \ast \overline{C} \rangle, \ \ \ \langle \overline{D}', \ast \overline{D}' \rangle = \langle \overline{D}, \ast \overline{D} \rangle, 
\end{align}
we have 
$$S_{3BFYM}' =S_{3BFYM},$$
i.e. the 3BFYM action  is  $L$-gauge invariant with a 
constrained condition
$$\overline{\kappa}(\overline{D},\psi)=0, \ \ \ \ or \ \ \ \ \overline{B}= \overline{\kappa}(\overline{D},\psi).$$
\end{itemize}

\section{Conclusion and Outlook}
In this article, we have reviewed the generalizations of the BF theory to the 2BF and 3BF theories, and  the YM theory to the 2YM and 3YM theories. 
Similarly, we generalized the BFYM theory to higher counterparts by using the categorical ladder.
%Our idea is adapted from a crucial fact that the YM theory is classically equivalent to the BFYM theory. The basic idea is to apply the higher gauge theory.
One main result in this paper is that we are able to construct the 2BFYM action based on the Lie crossed module. We also proved that the 2BFYM theory is classically equivalent to the 2YM theory, and under certain constraints, this new action is gauge invariant under thin and fat gauge transformations.
Another main result is that we developed the 3BFYM theory, which can define the dynamics of the 3YM theory and is gauge invariant under the higher gauge transformations in some cases.

The equivalence between higher YM and BFYM theories may be also preserved at the quantum level.
We leave these arguments for future work.

\section*{Acknowledgment}
This work is supported by the National Natural Science Foundation of China (Nos.11871350, NSFC no. 11971322).

\begin{appendix}
\section{Internal Hodge star}\label{star}
The Hodge dual in this paper satisfies the following conventions. Given a $d$-dimensional vector space $V$  with fixed metric tensor $g_{\mu \nu}$ of signature $(\sigma, +1, ..., +1)$, where $\sigma =-1(\sigma=+1)$ for the Lorentzian(Riemannian) case.

Supposing $\{\xi \vert i=1, ..., d\}$ be an order orthonormal basis for $V$, we consider the basis of $\Lambda^r(V)$ given by elementary wedge products $\xi_I=\xi_{i_1} \wedge ... \wedge \xi_{i_r}$ for $I= {i_1, ..., i_r}$ a strictly increasing sequence of $r$ integers between $1$ and $d$. Likewise, $\Lambda^{d-r}(V)$ has a basis given by the elementary wedge products $\xi_{I'}$ for strictly increasing sequences $I'$ consisting of $d-r$ integers between $1 $ and $d$. For any two $I$ and $I'$ as above, $\xi_I \wedge \xi_{I'}\in \Lambda^d(V)$ vanishes if $I$ and $I'$ are not complementary, whereas $\xi_I \wedge \xi_{I'}= \pm \xi_1 \wedge ... \wedge \xi_d $ in the complementary case.

In our situation, we normalize the Hodge star operator 
\begin{align}
\ast: \Lambda^r(V) \longrightarrow \Lambda^{d-r}(V)
\end{align}
so that
\begin{align}
\ast (\xi_1 \wedge \cdots \wedge \xi_r)= \xi_{r+1} \wedge \cdots \wedge \xi_d,
\end{align}
for $r$ from $0$ to $d$. Writing an arbitrary element $\omega \in \Lambda^r(V)$ as
\begin{align}
\omega = \frac{1}{r!}\omega_{i_1 \cdots i_r}\xi^{i_1}\wedge \cdots \wedge \xi^{i_r},
\end{align}
this implies the Hodge dual $\ast \omega \in \Lambda^{d-r}(V)$ is given by
\begin{align}
\ast \omega = \dfrac{1}{(d-r)! r!} \epsilon^{i_1 \cdots i_r}_{\ \ \ \ \ \  j_1 \cdots j_{d-r}} \omega_{i_1 \cdots i_r} \xi^{j_1}\wedge \cdots \wedge \xi^{j_{d-r}}.
\end{align}
The Hodge star acting on $r$-forms satisfies
\begin{align}\label{ast}
\ast ^2=(-1)^{r(d-r)}\sigma.
\end{align}

\section{The variation of the 3BF action}\label{em3BF}
Take the variational derivative of $S_{3BF}$\eqref{3BF} and have
\begin{align}\label{46}
\delta \langle \overline{B}, \mathcal{F} \rangle &= \langle \delta \overline{B}, \mathcal{F}\rangle + \langle \overline{B}, \delta \mathcal{F} \rangle \nonumber\\ 
&=\langle \delta \overline{B}, \mathcal{F} \rangle + \langle \delta A, d \overline{B} + A \wedge^{[,]} \overline{B} \rangle - \langle \delta B, \alpha^* (\overline{B}) \rangle, 
\end{align}
by using the result  \eqref{A_1A_2A_3} and \eqref{BAB},and 
\begin{align}\label{47}
\delta \langle \overline{C}, \mathcal{G} \rangle =& \langle \delta \overline{C}, \mathcal{G} \rangle + \langle \overline{C}, \delta \mathcal{G} \rangle \nonumber\\ 
= &\langle \delta \overline{C}, \mathcal{G} \rangle + \langle \overline{C}, \delta(d B + A \wedge^{\vartriangleright}\overline{B}) \rangle - \langle \overline{C}, \beta(\delta C) \rangle \nonumber\\
= &\langle \delta \overline{C}, \mathcal{G} \rangle + (-1)^d \langle \delta B, d \overline{C} + A \wedge^{\vartriangleright} \overline{C} \rangle \nonumber\\
&+ (-1)^d \langle \delta A, \overline{\sigma}(\overline{C}, B)\rangle + (-1)^d \langle \delta C, \beta^*(\overline{C})\rangle,
\end{align}
by using the result  \eqref{B_1AB_2}, \eqref{AB_1B_2} and \eqref{ASD}, and
\begin{align}\label{48}
\delta \langle \overline{D}, \mathcal{H} \rangle =&  \langle \delta \overline{D}, \mathcal{H} \rangle + \langle \overline{D}, \delta \mathcal{H} \rangle \nonumber\\ 
= &\langle \delta \overline{D}, \mathcal{H} \rangle + \langle \overline{D}, d \delta C \rangle + \langle \overline{D}, \delta A \wedge^{\vartriangleright} C \rangle + \langle \overline{D}, A \wedge^{\vartriangleright}\delta C \rangle \nonumber\\
&+ \langle \overline{D}, \delta B \wedge^{\left\{ \ ,\  \right\}}B, \rangle + \langle \overline{D}, B \wedge^{\left\{ \ ,\  \right\}}\delta B\rangle \nonumber\\
= & \langle \delta \overline{D}, \mathcal{H} \rangle + \langle d \delta C, \overline{D} \rangle + \langle \delta A \wedge^{\vartriangleright}C, \overline{D} \rangle + \langle A \wedge^{\vartriangleright} \delta C, \overline{D} \rangle \nonumber\\
&+ \langle \delta B\wedge^{\left\{ \ ,\  \right\}}B, \overline{D} \rangle + \langle B \wedge^{\left\{ \ ,\  \right\}}\delta B, \overline{D} \rangle \nonumber\\
= &\langle \delta \overline{D}, \mathcal{H} \rangle + \langle \delta C, d \overline{D} \rangle +(-1)^{d+1}\langle \delta A, \overline{\kappa}(\overline{D}, C) \rangle + \langle \delta C, A \wedge^{\vartriangleright}\overline{D} \rangle \nonumber\\
&- \langle \delta B, \overline{\eta_2}(\overline{D}, B) + \overline{\eta_1}(\overline{D}, B) \rangle \nonumber\\
= &\langle \delta \overline{D}, \mathcal{H} \rangle + \langle \delta C, d \overline{D} +A \wedge^{\vartriangleright}\overline{D} \rangle +(-1)^{d+1}\langle \delta A, \overline{\kappa}(\overline{D}, C) \rangle \nonumber\\
&- \langle \delta B, \overline{\eta_2}(\overline{D}, B) + \overline{\eta_1}(\overline{D}, B) \rangle,
\end{align}
by using \eqref{C_1AC_2},  \eqref{B_1B_2C} and \eqref{AC_1C_2}.

\section{The variation of the 3YM action}\label{v3bf}

\begin{align*}
\delta S_{3YM} = - \dfrac{1}{2e^2} \int_{M}  \langle \delta \mathcal{F} , \ast \mathcal{F}\rangle +\langle \delta \mathcal{G}, \ast \mathcal{G}\rangle 	+\langle \delta\mathcal{H}, \ast \mathcal{H}\rangle =0.\\
\end{align*}

For the first term, we have
\begin{align*}
\langle \delta \mathcal{F},\ast \mathcal{F}\rangle &= \langle \delta(dA + A \wedge A - \alpha(B)), \ast \mathcal{F} \rangle\\
%    &= \langle d \delta A, \ast \mathcal{F} \rangle + \langle A \wedge^{[,]} \delta A, \ast \mathcal{F} \rangle - \langle \alpha(\delta B), \ast \mathcal{F} \rangle \\
%    &= \langle \delta A, d \ast \mathcal{F} \rangle + \langle \delta A, A \wedge^{[,]} \ast \mathcal{F} \rangle - \langle \delta B ,\alpha^*(\ast \mathcal{F}) \rangle \\
&= \langle \delta A, d\ast \mathcal{F}+ A\wedge^{[,]}\ast \mathcal{F} 	\rangle-\langle \delta B, \alpha^*(\ast \mathcal{F})\rangle,
\end{align*}
by using $\delta(A\wedge A) = A \wedge^{[,]} \delta A$, \eqref{A_1A_2A_3} and \eqref{BAB}.\\ 
For the second term, we have
\begin{align*}
\langle \delta \mathcal{G}, \ast \mathcal{G} \rangle &= \langle \delta(dB + A\wedge^\vartriangleright B 	- \beta(C)), \ast  \mathcal{G} \rangle\\[2mm]
&=\langle d\delta B, \ast \mathcal{G} \rangle + \langle \delta A \wedge^\vartriangleright B, 	\ast \mathcal{G}\rangle+ \langle A\wedge^\vartriangleright \delta B, \ast \mathcal{G} \rangle - \langle \beta(\delta C), \ast \mathcal{G} \rangle\\[2mm]
&= -\langle \delta B, d \ast \mathcal{G} \rangle- \langle \delta A,\overline{\sigma}(\ast \mathcal{G}, 	B)\rangle - \langle \delta B, A\wedge^\vartriangleright \ast \mathcal{G} \rangle - \langle \delta C, \beta^*(\ast \mathcal{G})\rangle\\[2mm]
&=-\langle \delta A, \overline{\sigma}(\ast \mathcal{G}, B)\rangle - \langle \delta B,d \ast 	\mathcal{G} + A \wedge^\vartriangleright \ast \mathcal{G} \rangle - \langle \delta C, \beta^*(\ast \mathcal{G})\rangle,
\end{align*}
by using $\delta(A \wedge^{\vartriangleright} B) = \delta A \wedge^{\vartriangleright} B + A \wedge^{\vartriangleright} \delta B$, \eqref{AB_1B_2}, \eqref{B_1AB_2} and \eqref{ASD}.\\
And for the third term, we have
\begin{align*}
\langle \delta\mathcal{H}, \ast \mathcal{H}\rangle= &\langle \delta(dC + A \wedge^{\vartriangleright} C + B\wedge^{\left\{ \ ,\  \right\}} B), \ast \mathcal{H} \rangle\\
=&\langle d(\delta C), \ast \mathcal{H}\rangle + \langle \delta A\wedge^\vartriangleright C, \ast \mathcal{H} \rangle + \langle A\wedge^\vartriangleright \delta C, \ast \mathcal{H}\rangle\\
&+ \langle \delta B \wedge^{\left\{ \ ,\  \right\}} B,\ast\mathcal{H}\rangle + \langle B \wedge^{\left\{ \ ,\  \right\}} \delta B, \ast \mathcal{H}\rangle\\[2mm]
=&\langle \delta C, d\ast \mathcal{H}\rangle+(-1)^{d-1} \langle \delta A, \overline{\kappa}(\ast\mathcal{H}, C)\rangle +\langle \delta C,A\wedge^\vartriangleright \ast\mathcal{H}\rangle\\
&- \langle\delta B,\overline{\eta_2}(\ast \mathcal{H}, B) - \langle \delta B, \overline{\eta_1}(\ast \mathcal{H}, B)\\[2mm]
=& (-1)^{d-1}\langle \delta A, \overline{\kappa}(\ast\mathcal{H}, C)\rangle - \langle\delta B,\overline{\eta_2}(\ast \mathcal{H}, B) + \overline{\eta_1}(\ast \mathcal{H}, B)\rangle \\
&+\langle \delta C, d\ast \mathcal{H}+A\wedge^\vartriangleright \ast\mathcal{H}\rangle,
\end{align*}
by using \eqref{AC_1C_2},\eqref{C_1AC_2} and \eqref{B_1B_2C}.
\end{appendix}

\end{document}